\definecolor{darkblue}{rgb}{0,0,0.45}
\definecolor{darkred}{rgb}{0.6,0,0}
\definecolor{darkgreen}{rgb}{0.13,0.5,0}
\title{Diversity of Solutions: An Exploration Through the Lens of
Fixed-Parameter Tractability Theory\footnote{%
An extended abstract of this manuscript has appeared in the Proceedings of the Twenty-Ninth International Joint Conference on Artificial Intelligence, IJCAI 2020~\cite{BasteEtAl20-ijcai2020}.}}
\author[1]{Julien Baste}
\author[2]{Michael R.~Fellows}
\author[2]{Lars Jaffke}
\author[3,4]{Tom\'{a}\v{s} Masa\v{r}\'{i}k}
\author[2]{\\ Mateus de Oliveira Oliveira}
\author[5]{Geevarghese Philip}
\author[2]{Frances A.~Rosamond}
\affil[1]{Ulm University, Germany}
\affil[ ]{\texttt{julien.baste@univ-lille.fr}}
\affil[2]{University of Bergen, Norway} 
\affil[ ]{\texttt{\{michael.fellows,lars.jaffke,mateus.oliveira,frances.rosamond\}@uib.no}}
\affil[3]{%
University of Warsaw, Poland}
\affil[4]{%
Charles University, Prague, Czech Republic}
\affil[ ]{\texttt{masarik@kam.mff.cuni.cz}}
\affil[5]{Chennai Mathematical Institute and UMI ReLaX, India}
\affil[ ]{\texttt{gphilip@cmi.ac.in}}
\date{}
\begin{document}

\maketitle

\begin{abstract}
  When modeling an application of practical relevance as an instance of a
  combinatorial problem X, we are often interested not merely in finding
  \emph{one} optimal solution for that instance, but in finding a
  \emph{sufficiently diverse} collection of good solutions. In this work
  we initiate a systematic study of {\em diversity} from the point of view of
  fixed-parameter tractability theory. First, we consider an intuitive notion of
  \emph{diversity} of a collection of solutions which suits a large variety of
  combinatorial problems of practical interest. We then present an algorithmic
  framework which --\emph{automatically}-- converts a tree-decomposition-based
  dynamic programming algorithm for a given combinatorial problem X into a
  dynamic programming algorithm for the diverse version of X. Surprisingly, our
  algorithm has a polynomial dependence on the diversity parameter. 
\end{abstract}

\begin{keywords}
  Diversity, Combinatorial Optimization, Dynamic Programming.
\end{keywords}

\renewcommand\P{\ensuremath\mathsf{P}}
\newcommand\poly{\ensuremath\mathsf{poly}}

\section{Introduction}\label{section:Introduction}

In a typical combinatorial optimization problem, we are given a large space of
potential solutions and an objective function. The task is to find a solution
that maximizes or minimizes the objective function. In many situations of
practical relevance, however, it does not really help to get just \emph{one
  optimal} solution; it would be much better to have a small, but
\emph{sufficiently diverse} collection of \emph{sufficiently good} solutions.
Given such a small list of good solutions, we can select one which is best for
our purpose, perhaps by taking into account external factors---such as
aesthetical, political, or environmental---which are difficult or even impossible to
formalize. An early, illustrative example is the problem of generating floor
plans for evaluation by an architect~\cite{galle1989branch}. 

Solution diversity is already a fundamental
concept in many computational tasks. Take, for instance, a web search.
Here, we do not want to find \emph{the one} website that 
`optimally fits' the search term, neither a ranking of a small number of 
`best fits', but what is desirable is a \emph{diverse set} of websites that 
fit the search term \emph{reasonably well}.

Another advantage of considering a set of diverse solutions is that some 
of these solutions may find some use in contexts which are not specified 
\`{a} priori. For instance, in cutting problems~\cite{haessler1991cutting}, which 
are widely studied in the field of operations research, we are given a piece of 
material of standard size and a prescribed set of shapes.
The goal is to cut the material into pieces of the specified shapes in such a way that 
the amount of leftover material is minimized. In this setting, a minimum-size leftover may be 
viewed as a solution. A set of sufficiently diverse solutions
would give the user the opportunity to choose a suitable leftover that could 
be used later in the fabrication of pieces whose shapes have not
been specified in the input of the program.

The notion of diversity has also been applied to solution sets of various types of combinatorial problems.
For instance, the works~\cite{GloverEtAl2000} and~\cite{HebrardHOW05,HebrardEtAl2007}
seek to find solution sets to mixed-integer programming problems 
and constraint satisfaction problems, respectively, that are diverse. 
In other words, the solutions are far apart from each other 
in some mathematical notion of distance.
We refer to~\cite{PetitTrapp2019} for a timely overview of the subject.

From a complexity-theoretic perspective, there are two immediate barriers to this approach.
The first is that most combinatorial problems are already NP-hard when asking only for a single solution.
The second is that the very basic \textsc{Maximum Diversity} problem, which given a set of $n$ elements 
in a metric space and an integer $k < n$,
asks for a size-$k$ subset of the elements such that the sum of the pairwise distances is maximized, 
is NP-hard as well~\cite{KuoGloverDhir1993}.
The theory of \emph{fixed-parameter tractability}~\cite{DowneyFellows13}
provides a powerful framework to overcome these barriers.
The key goal is to identify a secondary numerical measure
of the inputs to an (NP-hard) computational problem,
called the \emph{parameter},
and to provide algorithms in whose runtime
the combinatorial explosion is restricted to the parameter $k$.
More formally, a problem is \emph{fixed-parameter tractable (FPT)},
if it can be solved in time $f(k) \cdot n^c$, where $f$ is a computable function,
$n$ the input size, and $c$ a fixed constant.
On instances where the parameter value is relatively small, 
FPT-algorithms are efficient.
In an application context, we are naturally concerned with finding 
\emph{small} diverse sets of solutions since the aim is to provide the
user with a few alternatives that can then be compared manually.
Therefore, the number of requested solutions is an ideal candidate for parameterization.

In this work, we propose to study the notion of solution diversity
from the perspective of fixed-parameter tractability theory.
We demonstrate the theoretical feasibility of this paradigm 
by showing that diverse variants of 
a large class of parameterized problems admits FPT-algorithms.
Specifically, we consider \emph{vertex-problems} on graphs,
which are sets of pairs $(G, S)$ of a graph $G$ and a subset $S$ of its vertices 
that satisfies some property.
For instance, in the \textsc{Vertex Cover} problem,
we require the set $S$ to be a vertex cover of $G$ 
(i.e.,~$S$ has to contain at least one endpoint of each edge of $G$).
One consequence of our main result which we discuss below in more detail
is that the diverse variant of \textsc{Vertex Cover}, asking for $r$ solutions,
is FPT when parameterized by solution size plus $r$.

Before we proceed, we would like to point out promising future applications
of the \emph{Diverse FPT} paradigm in AI.
The \textsc{Vertex Cover} problem itself naturally models
conflict-resolution: the entities are the vertices of the graph,
and a conflict is represented by an edge.
Now, a vertex cover of the resulting graph is a set of entities
whose removal makes the model conflict-free.
An example of a potential use of \textsc{Diverse Vertex Cover}
in a planning scenario is given in~\cite{BasteEtAl19}.
In general, in planning and scheduling problems,
a large amount of \emph{side information} is lost or intentionally omitted 
in the modeling process.
Some side information could make the model too complex to be solved,
and other information may even be impossible to model.
Offering the user a small number of good solutions to a more easily computable `base model', 
among which they can handpick their favorite solution is a feasible alternative.

\paragraph{A Formal Notion of Diversity} 
We choose a very natural and general measure as our notion of diversity among solutions.
Given two subsets $S$ and $S'$ of a set $V$ the \emph{Hamming
  distance} between $S$ and $S'$ is the number
  \[\hammingdistance(S,S') = |S\backslash S'| + |S'\backslash S|.\]
We define the \emph{diversity of a list} $S_1,\ldots,S_r$ of subsets
of $V$ to be
\[\diversity(S_1,\ldots,S_r) = \sum_{1\leq i<j\leq r} \hammingdistance(S_i,S_j).\]
We can now define the diverse version of vertex-problems:
\begin{definition}[Diverse Problem]\label{definition:DiverseProblem}
Let $\vertexproblem_1,\ldots,\vertexproblem_r$ be vertex-problems,
and let $d\in \N$. We let 
\begin{align*}
\diverse^d(\vertexproblem_1,\ldots,\vertexproblem_r)  =  \{&(G,X_1,\ldots,X_r) \mid
            (G,X_i)\in \vertexproblem_i,\\ 
			&\diversity(X_1,\ldots,X_r)\geq d\}.	
\end{align*}
\end{definition}
\noindent
Intuitively, given vertex-problems $\vertexproblem_1,\ldots,\vertexproblem_r$ and a graph $G$, we want 
to find subsets $S_1,\ldots,S_r$ of vertices of $G$ such that for each $i\in \{1,\ldots,r\}$, $S_i$ 
is a solution for problem $\vertexproblem_i$ on input $G$, and such that the list $S_1,\ldots,S_r$ has 
diversity at least $d$. If all vertex-problems $\vertexproblem_1,\ldots,\vertexproblem_r$ are the same
problem $\vertexproblem$, then we write $\diverse^d_r(\vertexproblem)$ as a shortcut to 
$\diverse^d(\vertexproblem_1,\ldots,\vertexproblem_r)$. 

\paragraph{Diversity and Dynamic Programming}
The treewidth of a graph is a structural parameter that quantifies
how close the graph is to being a forest (i.e., a graph without
cycles).  The popularity of this parameter stems from the fact
that many problems that are NP-complete on general graphs can be
solved in polynomial time on graphs of constant treewidth. In
particular, a celebrated theorem due to
Courcelle~ (see \cite{courcelle1990monadic}) states that any problem
expressible in the monadic second-order logic of graphs can be
solved in polynomial time on graphs of constant treewidth. Besides
this metatheorem, the notion of treewidth has found applications
in several branches of Artificial Intelligence such as Answer Set
Programs~\cite{bliem2017impact}, checking the consistency of
certain relational algebras in Qualitative Spacial
Reasoning~\cite{bodirsky2011rcc8}, compiling Bayesian
networks~\cite{chavira2007compiling}, determining the winners of
multi-winner voting systems~\cite{yang2018multiwinner}, analyzing
the dynamics of stochastic social
networks~\cite{barrett2007computational}, and solving constraint
satisfaction problems~\cite{jegou2007dynamic}. 
A large number of these algorithms are in fact FPT-algorithms when treewidth is the parameter.
Typically, such algorithms are dynamic programming
algorithms which operate on a tree-decomposition in a bottom-up
fashion by computing data from the leaves to the root.

\paragraph{Dynamic Programming Core Model}
We introduce a formalism for dynamic programming based on a tree decomposition, 
which we call the \emph{Dynamic Programming Core} model.
This notion captures a large variety of dynamic programming algorithms on tree decompositions.
We use the model to derive our main result
(Theorem~\ref{theorem:MainTheoremDynamicProgramming}) which is a
framework to efficiently---and automatically---transform
treewidth-based dynamic programming algorithms for vertex-problems
into algorithms for the diverse versions of these problems. More
precisely, we show that if
$\vertexproblem_1,\ldots,\vertexproblem_r$ are vertex-problems
where, for each $i\in \{1,\ldots,r\}$, $\vertexproblem_i$ can be
solved in time $f_i(t)\cdot n^{\Ocal(1)}$, then
$\diverse^d(\vertexproblem_1,\ldots,\vertexproblem_r)$ can be solved
in time $\left(\prod_{i=1}^r f_i(t)\right)\cdot n^{\Ocal(1)}$. 
In particular,
if a vertex-problem $\vertexproblem$ can be solved in
time $f(t)\cdot n^{\Ocal(1)}$, then its diverse version
$\diverse_r^d(\vertexproblem)$ can be solved in time
${f(t)}^{r}\cdot n^{\Ocal(1)}$. 
The surprising aspect of this result is
that the running time depends only \emph{polynomially} on $d$
(which is at most $r^2n$), while a na\"{i}ve dynamic programming
algorithm would have a runtime of $d^{\Ocal(r^2)}\cdot f(t)^r\cdot n^{\Ocal(1)}$.

\paragraph{Discussion of the Diversity Measure}
Various measures of diversity have been used, studied, and compared in
different areas of computer science.
We choose the \emph{sum} of the Hamming distances over all pairs of elements for
this work.
This measure is commonly used for population diversity 
in genetic algorithms~\cite{GaborBPS18,WinebergO03}.
Nonetheless, we would like to point out that it has some weaknesses.
For instance, taking many copies of two disjoint solutions yields a relatively high diversity value, 
and such a solution set is not `diverse' from an intuitive point of view.
We refer to~\cite{BasteEtAl19} for a more detailed discussion.
Another natural measure using the Hamming distance is
the \emph{minimum} Hamming distance over all pairs in a set, as it is done 
e.g.\ in~\cite{HebrardHOW05,HebrardEtAl2007}.
We would like to point out that a straightforward adaptation of our
algorithmic framework would result in a running time of
$d^{\Ocal(r^2)}\cdot f(t)^r\cdot n^{\Ocal(1)}$,
where $d$ is the diversity, $r$ the number of solutions, and $t$ the treewidth.
This remains FPT only when the diversity $d$ 
is an additional component of the parameter,
or when $d$ is naturally upper bounded by $t$ and $r$.
Consider for instance \textsc{Diverse Vertex Cover},
asking for vertex covers of size at most $k$.
In any nontrivial instance, $t$ is at most $k$,
and the Hamming distance between two solutions is at most $2k$, 
therefore we may assume that $d \le 2k$.
This implies that \textsc{Diverse Vertex Cover} can be solved
in time $2^{\Ocal(r^2\log k) + kr}\cdot n^{\Ocal(1)}$
using the minimum Hamming distance as a diversity measure.

\paragraph{Related Work}
The above-mentioned \textsc{Maximum Diversity} problem
has applications in the generation of diverse query results, 
see e.g.~\cite{GollapudiSharma2009,AbbassiMirrokniThakur2013}.
Besides mixed integer programming~\cite{GloverEtAl2000,DannaWoodruff2009,PetitT15},
binary integer linear programming~\cite{GreistorferEtAl2008,TrappKonrad2015}
and constraint programming~\cite{HebrardHOW05,HebrardEtAl2007},
diverse solution sets have been considered in SAT solving~\cite{Nadel2011},
recommender systems~\cite{AdomaviciusKwon2014},
routing problems~\cite{SchittekatS2009},
answer set programming~\cite{EiterEtAl2013}, and
decision support systems~\cite{LokketangenW2005,HadzicEtAl2009}.

\section{Preliminaries}\label{section:Preliminaries}

For positive integers $a$ and $b$, with $a < b$, %
we use \(\intv{a,b}\) to
denote the set \(\{a,a+1,\ldots,b\}\). We use \(V(G)\) and \(E(G)\),
respectively, to denote the vertex and edge sets of a graph
\(G\). For a tree $T$ rooted at $\roots$ we use $T_t$ to
denote the subtree of $T$ rooted at a vertex \(t\in V(T)\). 
A \emph{rooted tree decomposition} of a graph $G$ 
is a tuple ${\cal D}=(T,\roots,{\cal X})$, where $T$ is a tree rooted at $\roots \in V(T)$
and ${\cal X}=\{X_{t}\mid t\in V(T)\}$ is a collection of subsets of $V(G)$
such that:
\begin{itemize}
\item $\bigcup_{t \in V(T)} X_t = V(G)$,
\item for every edge $\{u,v\} \in E(G)$, there is a $t \in V(T)$ such that $\{u, v\} \subseteq X_t$, and
\item for each $\{x,y,z\} \subseteq V(T)$ such that $z$ lies on the unique path between $x$ and $y$ in $T$,  $X_x \cap X_y \subseteq X_z$.
\end{itemize}
We say that the vertices of $T$ are the \emph{nodes} of ${\cal D}$ and that the sets in ${\cal X}$ are the \emph{bags} of ${\cal D}$.
Given a node $t \in V(T)$, we denote by $G_t$ the subgraph of $G$
induced by the set of vertices  
\(\bigcup_{s\in V(T_t)} X_s.\) 
The \emph{width} of a  tree decomposition 
${\cal D}=(T,q, {\cal X})$ is defined as $\max_{t \in V(T)} |X_t| - 1$.
The \emph{treewidth} of a graph $G$, denoted by $\tw(G)$, is the smallest integer $w$ such that there exists a rooted tree decomposition of $G$ of width at most $w$.
The \emph{rooted path decomposition} of a graph is a rooted tree decomposition ${\cal D}=(T,\roots,{\cal X})$ such that $T$ is a path and $\roots$ is a vertex of degree $1$.
The \emph{pathwidth} of a graph $G$, denoted by $\pw(G)$, is the smallest integer $w$ such that there exists a rooted path decomposition of $G$ of width at most $w$.
Note that in a rooted path decomposition, every node has at most one child. 

For convenience we will always assume that the bag associated to
the root of a rooted tree decomposition is empty. For a node
\(t\in{}V(T)\) we use $\delta_{\mathcal{D}}(t)$, or $\delta(t)$ when $\mathcal{D}$ is clear from the context, to denote the
number of children of $t$ in the tree $T$.  For nodes $t$ and $t'$
of $V(T)$ where $t'$ is the parent of $t$ we use
$\forgotten(t) = {X}_{t} \setminus {X}_{t'}$ to
denote the set of vertices of $G$ which are \emph{forgotten} at
$t$.  By convention, for the root $\roots$ of $T$, we let
$\forgotten(\roots) = \emptyset$.  For \(t\in{}V(T)\) we
denote by $\new(t)$ the set
$X_t \setminus \bigcup_{i=1}^{\delta(t)} X_{t_i}$ where
$t_1, \ldots, t_{{\delta(t)}}$ are the children of $t$.
Given a rooted tree decomposition \(\mathcal{D}\) of a graph \(G\)
one can obtain, in linear time, a tree decomposition
\((T,q, \mathcal{X})\) of \(G\) of the same width as
\(\mathcal{D}\) such that for each $t \in V(T)$,
$\delta(t) \leq 2$ and $|\new(t)| \leq 1$~\cite{Cygan}.  From now
on we assume that every rooted tree decomposition is of this kind.

\section{A First Example: Diverse Vertex Cover}\label{section:VertexCover}

The main result of this paper is a general framework to
automatically translate tree-decomposition-based dynamic
programming algorithms for ver\-tex-problems into algorithms for the
diverse versions of these problems. We develop this framework in
Section~\ref{section:DynamicProgramming}.  In this section we
illustrate the main techniques used in this conversion process by
showing how to translate a tree-decomposition-based dynamic
programming algorithm for the \textsc{Vertex Cover} problem into
an algorithm for its diverse version \textsc{Diverse Vertex
  Cover}.
Given a graph $G$ and three integers $k$, $r$, and $d$, the
\textsc{Diverse Vertex Cover} problem asks whether one can find
$r$ vertex covers in $G$, each of size at most $k$, such that
their diversity is at least $d$.
Our algorithm for this problem will run in $2^{\Ocal{(kr)}} |V(G)|$ time.

\newcommand{\indicatorfunction}{\gamma}

\subsection{Incremental Computation of Diversity}
Recall that we defined the diversity of a list
$S_1,S_2,\ldots,S_r$ of subsets of a set $V$ to be
\[ \diversity(S_1,\ldots,S_r) = \sum_{1\leq i<j\leq r} \hammingdistance(S_i,S_j).\]
We will now describe a way to compute the diversity
\(\diversity(S_1,\ldots,S_r)\) in an incremental fashion, by incorporating the influence of each
element of $V$ in turn. For each element $v\in V$ and each pair of
subsets $S,S'$ of $V$, we define $\indicatorfunction(S,S',v)$ to
be $1$ if \(v\in(S\setminus{}S')\cup(S'\setminus{}S)\), and to be
$0$ otherwise. Intuitively, $\indicatorfunction(S,S',v)$ is $1$ if
and only if the element $v$ contributes to the Hamming distance
between $S$ and $S'$. Given this definition we can rewrite
$\hammingdistance(S,S')$ as

\[\hammingdistance(S,S') = \sum_{v\in V} \indicatorfunction(S,S',v),\]

and the diversity of a list $S_1,\ldots,S_r$ of subsets of $V$ as 

\[\begin{array}{ll}
	\diversity(S_1,\ldots,S_r)    & =  \sum_{1\leq i<j\leq r}  \sum_{v\in V} \indicatorfunction(S_i,S_j,v) \\

    \\
				   & = \sum_{v\in V} |\{\ell\;:\; v\in S_{\ell}\}|\cdot |\{\ell\;:\; v\notin S_{\ell}\}|. 
\end{array}
\]

Now, if we define the \emph{influence} of $v$ on the list  $S_1,\ldots,S_r$ as \[\diverseinfluence(S_1,\ldots,S_r,v) =
|\{\ell\;:\; v \in S_{\ell}\}| \cdot |\{\ell\;:\; v \not\in
S_{\ell}\}|,\] then we have that

\begin{equation}
\label{equation:Incremental}	
\diversity(S_1,\ldots,S_r) = \sum_{v\in V} \diverseinfluence(S_1,\ldots,S_r,v).
\end{equation}

\subsection{From Vertex Cover to Diverse Vertex Cover}
We now solve \textsc{Diverse Vertex Cover} using dynamic
programming over a tree decomposition of the input graph.  An
excellent exposition of tree-width-based dynamic programming
algorithms can be found in~\cite[Chapter~7]{Cygan}.

Let $(G,k,r,d)$ be an instance of {\sc Diverse Vertex Cover} and let
$\mathcal{D} = (T,\roots,\mathcal{X})$ be a rooted tree decomposition of $G$. 
For each node $t \in V(T)$, we define the set 
\[
\Ical_t  =  \{((S_1,s_1),\ldots,(S_r,s_r),\ell) \mid \ell \in \intv{0,d}, \forall i \in \intv{1,r}, S_i \subseteq X_t, s_i \in \intv{0,k}\}.
\]

This set $\Ical_t$, $t \in V(T)$, is such that the partial solutions we will construct for the node $t$ will always be a subset of $\Ical_t$.
Note that for each $t\in V(T)$, $|\Ical_t| \leq (2^{|X_t|} \cdot (k+1))^r\cdot (d+1)$. Now,
our dynamic programming algorithm for {\sc Diverse Vertex Cover} consists in constructing for 
each $t\in V(T)$ a subset $\Rcal_t \subseteq \Ical_t$ as follows. 
Let $t$ be a node in $V(T)$ with children $t_1,\ldots, t_{\delta(t)}$.
We recall that, by convention, this set of children is of size $0$, $1$, or $2$. We let 
$\Rcal_t$ be the set of all tuples $((S_1,s_1),\ldots,(S_r,s_r),\ell) \in \Ical_t$ satisfying 
the following additional properties: 
\begin{enumerate}
\item For each $j \in \intv{1,r}$, $E(G[X_t\setminus S_j]) = \emptyset$. 
\item For each $i \in \intv{1,\delta(t)}$ there exists a tuple 
  $((S_1^i,s_1^i),\ldots,(S_r^i,s_r^i),\ell_i)$ in $\Rcal_{t_i}$ such that 
  \begin{enumerate}
  \item $S_j \cap X_{t_i} = S_j^i \cap X_t$ for each $i\in \intv{1,\delta(t)}$ and each 
    $j \in \intv{1,r}$,
  \item For each $j \in \intv{1,r}$, 
    $s_j = | \forgotten(t) \cap S_j| + \sum_{i=1}^{\delta(t)}s_j^i$,
  \item and  $\ell = \min (d,m)$ where  $m = \sum_{v\in \forgotten(t)} \diverseinfluence (  S_1,\ldots,  S_r , v ) + \sum_{i=1}^{\delta(t)}\ell_i. $
  \end{enumerate}
\end{enumerate}

\begin{lemma}\label{lemma:Correctness}
$(G,k,r,d)$ is a \yes{}-instance of {\sc Diverse Vertex Cover} if and only if there is a tuple
$((S_1,s_1),\ldots,(S_r,s_r),\ell)$ in $\Rcal_q$ such that $\ell = d$.
\end{lemma}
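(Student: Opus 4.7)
My plan is to prove a stronger invariant characterizing $\Rcal_t$ for every node $t$ by bottom-up structural induction on the rooted tree decomposition, and then read off Lemma~\ref{lemma:Correctness} as the special case $t = \roots$. Writing $F_t \defeq V(G_t) \setminus X_t$, the invariant I would establish is that a tuple $((S_1,s_1),\ldots,(S_r,s_r),\ell)$ lies in $\Rcal_t$ if and only if there exist $\hat{S}_1,\ldots,\hat{S}_r \subseteq V(G_t)$ such that each $\hat{S}_j$ is a vertex cover of $G_t$ with $\hat{S}_j \cap X_t = S_j$ and $|\hat{S}_j| = s_j \leq k$, and
\[
\ell \;=\; \min\!\left(d,\;\sum_{v \in F_t} \diverseinfluence(\hat{S}_1,\ldots,\hat{S}_r,v)\right).
\]
Because $X_\roots = \emptyset$ forces $S_j = \emptyset$ and $F_\roots = V(G)$, combining the invariant at the root with equation~(\ref{equation:Incremental}) yields the lemma.

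The base case is a leaf $t$ with $\delta(t) = 0$, where the conditions defining $\Rcal_t$ involve only the bag $X_t$ and the vertices of $\forgotten(t)$; verifying the invariant here reduces to direct inspection. In the inductive step at an internal node $t$, the forward direction restricts the given witnesses $\hat{S}_1,\ldots,\hat{S}_r$ to each child subgraph $G_{t_i}$, applies the inductive hypothesis to produce tuples in the corresponding $\Rcal_{t_i}$, and verifies that these together with $((S_1,s_1),\ldots,(S_r,s_r),\ell)$ satisfy the conditions defining $\Rcal_t$. The backward direction takes a tuple in $\Rcal_t$, uses condition~2 to pick a witnessing child tuple in each $\Rcal_{t_i}$, invokes the inductive hypothesis to obtain sets $\hat{S}_j^i \subseteq V(G_{t_i})$, and glues them into $\hat{S}_j \defeq S_j \cup \bigcup_i \hat{S}_j^i$; this gluing is consistent by condition~2(a), the resulting $\hat{S}_j$ covers every edge of $G_t$ because the edges of $G_{t_i}$ are inductively covered and the remaining edges of $G_t$ lie within $X_t$ and are covered by $S_j$ via condition~1, and the size equality $|\hat{S}_j| = s_j$ follows from condition~2(b) together with the disjointness of $\forgotten(t)$ and the $F_{t_i}$'s.

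The hard part will be verifying the $\ell$-component of the invariant, which is the truly novel piece relative to a standard vertex-cover DP. I would split the argument into two observations. First, each vertex of $G$ is forgotten at a unique node, so $F_t$ decomposes as the disjoint union $\forgotten(t) \sqcup \bigsqcup_i F_{t_i}$; moreover, $\diverseinfluence(\hat{S}_1,\ldots,\hat{S}_r,v)$ depends only on the indicators of $v$ being in each $\hat{S}_j$, so the sum in the invariant splits as
\[
\sum_{v \in \forgotten(t)} \diverseinfluence(S_1,\ldots,S_r,v) \;+\; \sum_{i=1}^{\delta(t)} \sum_{v \in F_{t_i}} \diverseinfluence(\hat{S}_1^i,\ldots,\hat{S}_r^i,v),
\]
matching exactly the two summands in condition~2(c). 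Second, for nonnegative integers $A$ and $a_1,\ldots,a_c$ the identity $\min(d, A + \sum_i \min(d, a_i)) = \min(d, A + \sum_i a_i)$ holds, so capping each $\ell_i$ at $d$ in the child tuples is harmless for the threshold test at the root. Combining these two observations with the inductive hypothesis recovers the recursion in condition~2(c) and closes the induction.
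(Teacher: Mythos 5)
Your overall architecture --- a bottom-up structural induction establishing a characterization of $\Rcal_t$ at every node and then reading off the lemma at the root, where $X_\roots=\emptyset$ --- is exactly the paper's, and your $\min$-identity $\min(d,A+\sum_i\min(d,a_i))=\min(d,A+\sum_ia_i)$ is correct and is indeed what is needed to handle the capping at $d$. However, the invariant you propose is not the one that the recursion defining $\Rcal_t$ actually maintains, and the induction as you describe it does not close. The paper's convention is $\forgotten(t)=X_t\setminus X_{t'}$ for $t'$ the parent of $t$, so $\forgotten(t)$ is a subset of the \emph{current} bag $X_t$: conditions~2(b) and~2(c) charge a vertex at the node where it is \emph{about to} disappear, while it still lies in $X_t$. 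Consequently the region already accounted for at $t$ is $V(G_t)\setminus Y_t$ with $Y_t\defeq X_t\setminus\forgotten(t)$, which strictly contains your $F_t=V(G_t)\setminus X_t$. Your claimed decomposition $F_t=\forgotten(t)\sqcup\bigsqcup_iF_{t_i}$ is false --- $\forgotten(t)\subseteq X_t$ is disjoint from $F_t$ --- and the correct identity is $V(G_t)\setminus Y_t=\forgotten(t)\sqcup\bigsqcup_i\bigl(V(G_{t_i})\setminus Y_{t_i}\bigr)$.

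Concretely, your invariant already fails at the base case: for a leaf $t$ with $\forgotten(t)\neq\emptyset$ the recursion places into $\Rcal_t$ tuples with $\ell=\min\bigl(d,\sum_{v\in\forgotten(t)}\diverseinfluence(S_1,\ldots,S_r,v)\bigr)$, whereas your invariant (with $F_t=\emptyset$, since $V(G_t)=X_t$ at a leaf) predicts $\ell=0$. The size claim $|\hat S_j|=s_j$ fails for the same reason: condition~2(b) accumulates $|\forgotten(s)\cap\hs_j|$ over the nodes $s$ of the subtree, i.e.\ $s_j=|\hs_j\setminus Y_t|$, which differs from $|\hs_j|$ by $|\hs_j\cap Y_t|$ (take a single-edge graph with bags $\{u,v\}$, $\{v\}$, $\emptyset$ and $S_j=\{u,v\}$ at the leaf: the recursion gives $s_j=1$, not $2$). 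The repair is exactly the invariant the paper states: require $|\hs_j\setminus Y_t|=s_j$ and $\ell=\min(d,\diversity(\hs_1\setminus Y_t,\ldots,\hs_r\setminus Y_t))$. Since $Y_\roots=\emptyset$, your conclusion at the root is unaffected, but as written both the base case and the bookkeeping identity driving the inductive step are incorrect.
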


\begin{proof}

Using induction, one can see that 
for each $t\in V(T)$, 
$\Rcal_t$ is the set of every element of $\Ical_t$ such that,
with $Y_t = X_t \setminus \forgotten(t)$,
there exists $(\hs_1, \ldots, \hs_r) \in V(G_t)^r$,
that satisfies:
\begin{itemize}
\item for each $i \in \intv{1,r}$, $\hs_i$ is a vertex cover of $G_t$,
\item for each $i \in \intv{1,r}$, $\hs_i \cap X_t = S_i$,
\item for each $i \in \intv{1,r}$, $|\hs_i \setminus Y_t| = s_i$, and
\item $\min(d,\diversity(\hs_1\setminus Y_t, \ldots, \hs_r\setminus Y_t)) = \ell$.
\end{itemize}

As the root $q$ of the tree decomposition $\mathcal{D}$ is such that $X_q = \emptyset$, we obtain that the elements in $\Rcal_q$ are the elements $((\emptyset,s_1),\ldots,(\emptyset,s_r),\ell)$ of $\Ical_q$ such that there exists $(\hs_1, \ldots, \hs_r) \in V(G)^r$, that satisfy, 
\begin{itemize}
\item for each $i \in \intv{1,r}$, $\hs_i$ is a vertex cover of $G_t$,
\item for each $i \in \intv{1,r}$, $|\hs_i| = s_i \leq k$, and
\item $\min(d,\diversity(\hs_1, \ldots, \hs_r)) = \ell$.
\end{itemize}
As such, a tuple $(\hs_1, \ldots, \hs_r)$ of subsets of $V(G)$ is a solution of 
{\sc Diverse Vertex Cover} if and only if $\ell \geq d$, the lemma follows.
\end{proof}

\begin{theorem}\label{theorem:DiverseVC}
Given a graph $G$, integers $k,r,d$, and a rooted tree decomposition $\mathcal{D} = (T,\roots,\mathcal{X})$ of $G$ of width $w$,
one can determine whether $(G,k,r,d)$ is a \yes{}-instance of {\sc Diverse Vertex Cover} in time 
\[\Ocal(2^{r}\cdot (2^{w+1} \cdot (k+1))^{a\cdot r}\cdot d^a \cdot w \cdot r  \cdot n),\]
where  $a = \max_{t \in V(T)}\delta(t) \leq 2$ and $n = |V(T)|$.
\end{theorem}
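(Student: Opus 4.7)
By Lemma~\ref{lemma:Correctness} it suffices to compute $\Rcal_t$ for every $t \in V(T)$ in a bottom-up traversal and then inspect $\Rcal_q$ for a tuple with $\ell = d$. I would compute $\Rcal_t$ directly from its defining recurrence via a \emph{forward} enumeration: for each node $t$ with children $t_1,\ldots,t_{\delta(t)}$, iterate over all combinations $(r_1,\ldots,r_{\delta(t)}) \in \Rcal_{t_1} \times \cdots \times \Rcal_{t_{\delta(t)}}$ together with every choice of $S_j \cap \new(t)$ for $j \in \intv{1,r}$, and for each such combination assemble the unique candidate $((S_1,s_1),\ldots,(S_r,s_r),\ell) \in \Ical_t$ prescribed by conditions~2(a)--(c). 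After discarding combinations whose child tuples disagree on a shared interface $X_t \cap X_{t_i}$ or whose assembled tuple fails condition~1, the surviving candidates form exactly $\Rcal_t$.

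The work at a single node $t$ then decomposes into three factors. First, since $|\Rcal_{t_i}| \leq |\Ical_{t_i}| \leq (2^{w+1}(k+1))^r \cdot (d+1)$ and there are at most $a$ children, enumerating $\prod_i \Rcal_{t_i}$ contributes $(2^{w+1}(k+1))^{ar} \cdot \Ocal(d^a)$. Second, since $|\new(t)| \leq 1$ there are at most $2^r$ possible extensions $S_j \cap \new(t)$ to consider. Third, for each combination the algorithm computes the new $s_j$ values via condition~2(b), verifies condition~1 only on the new edges incident to $\new(t)$ (at most $w$ such edges, since each child's tuple already guarantees condition~1 on $X_{t_i}$), and updates $\ell$ by summing $\diverseinfluence(S_1,\ldots,S_r,v)$ over $v \in \forgotten(t)$ using Equation~\eqref{equation:Incremental}; since $|\forgotten(t)| \leq w+1$, all of this fits within $\Ocal(w \cdot r)$ time. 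Multiplying the three factors and summing over the $n$ nodes of $T$ yields the stated bound.

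The main obstacle I anticipate is avoiding a naive double loop over $\Ical_t \times \prod_i \Rcal_{t_i}$, which would blow up the running time by an extra factor of $|\Ical_t|$. The forward enumeration above sidesteps this because each element of $\Rcal_t$ is produced from witnesses that are already being iterated over, so no separate pass over $\Ical_t$ is needed, and consistency on each interface $X_t \cap X_{t_i}$ reduces to a pointwise comparison of characteristic vectors over at most $w+1$ vertices. Correctness is then immediate from the recurrence: a tuple $\tau \in \Ical_t$ lies in $\Rcal_t$ exactly when it arises as the output of some consistent combination of child tuples and some extension on $\new(t)$, which is precisely what the enumeration traverses.
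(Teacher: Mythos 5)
Your proposal is correct and follows essentially the same route as the paper: a bottom-up forward enumeration over $\prod_i \Rcal_{t_i}$ times the $2^{|\new(t)|\cdot r} \le 2^r$ choices on the new vertex, with $\Ocal(w\cdot r)$ work per combination, then summing over the $n$ nodes. Your accounting of the per-combination cost (checking condition~1 only on edges incident to $\new(t)$ and doing interface comparisons over at most $w+1$ vertices) is in fact slightly more explicit than the paper's, but the argument is the same.
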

\begin{proof}
Let us analyze the time needed to compute $\Rcal_q$.
We have that, for each $t \in V(\mathcal{D})$, $|\Ical_t| \leq (2^{\cdot|X_t|} \cdot (k+1))^r\cdot (d+1)$.
Note that
given $I_1, \ldots, I_{\delta(t)}$ be elements of $\Rcal_{t_1}, \ldots, {\Rcal_{t_{\delta(t)}}}$, there are at most $2^{|\new(t)|\cdot r} \leq 2^r$ ways to create an element $I$ of $\Rcal_t$ by selecting, or not the (potential) new element of $X_t$ for each set $S_i$, $i \in \intv{1,r}$.
The remaining is indeed fixed by $I_1, \ldots, I_{\delta(t)}$.
Thus, $\Rcal_t$ can be computed in time $\Ocal(r \cdot |X_t| \cdot 2^{r}\cdot \prod_{i=1}^{\delta(t)}|\Rcal_{t_i}|)$, where the factor $r \cdot |X_t|$ appears when verifying that the element we construct satisfy
$\forall j \in \intv{1,r}, E(G[X_j\setminus S_j]) = \emptyset$.
As we need to compute $\Rcal_t$ for each $t \in V(\mathcal{D})$ and that $|V(\mathcal{D})|  = \Ocal(n)$ and we can assume that $\delta(t) \leq 2$ for each $t \in V(\mathcal{D})$, the theorem follows.
\end{proof}
\begin{remark}\label{rem:pw}
Given a graph $G$ and a vertex cover $Z$ of $G$ of size $k$, one
can find a rooted path decomposition
$\mathcal{D} = (T,q,\mathcal{X})$ of $G$ of width $k$, in linear time. 
\end{remark}
This can be done by considering the bags
$Z \cup \{v\}$ for each $v \in V(G)$ in any fixed order.
Thus, from Theorem~\ref{theorem:DiverseVC}, we get the following corollary, which establishes an upper bound 
for the running time of our dynamic programming algorithm for {\sc Diverse Vertex Cover} solely in terms of the size $k$ of the vertex cover, the number $r$ of requested solutions, and the diversity $d$. 

\begin{corollary}\label{corollary:DVC}
  {\sc Diverse Vertex Cover} can be solved on an input $(G,k,r,d)$ in time 
  \(\Ocal((2^{k+2} \cdot (k+1))^{r} \cdot d \cdot k \cdot r \cdot |V(G)|).\)
\end{corollary}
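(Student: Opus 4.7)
The plan is to reduce to Theorem~\ref{theorem:DiverseVC} by producing, in linear time, a rooted path decomposition whose width is bounded in terms of $k$ alone, and then simplifying the resulting expression.

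First I would handle the trivial case: if $G$ does not admit any vertex cover of size at most $k$, then $(G,k,r,d)$ is automatically a \no-instance, regardless of $r$ and $d$. Any classical branching algorithm for \textsc{Vertex Cover} -- running in time $\Ocal(2^k \cdot (k + |V(G)|))$ -- either produces a vertex cover $Z$ with $|Z| \leq k$ or certifies that none exists; this is comfortably within the target bound. So we may assume such a $Z$ has been found.

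Next, I would feed $Z$ into Remark~\ref{rem:pw} to obtain, in linear time, a rooted path decomposition $\mathcal{D} = (T,q,\mathcal{X})$ of $G$ of width at most $k$. Because $\mathcal{D}$ is a path decomposition, every node of $T$ has at most one child, so in the notation of Theorem~\ref{theorem:DiverseVC} we have $a = \max_{t \in V(T)} \delta(t) \leq 1$; moreover $w \leq k$ and $|V(T)| = \Ocal(|V(G)|)$. Substituting these values into the running-time bound of Theorem~\ref{theorem:DiverseVC} yields
\[
\Ocal\bigl(2^{r} \cdot (2^{k+1}\cdot (k+1))^{r}\cdot d \cdot k \cdot r \cdot |V(G)|\bigr).
\]
Finally I would absorb the leading factor $2^{r}$ into the base of the exponential using the identity
\[
2^{r} \cdot (2^{k+1}\cdot (k+1))^{r} \;=\; (2 \cdot 2^{k+1}\cdot (k+1))^{r} \;=\; (2^{k+2}\cdot (k+1))^{r},
\]
which exactly matches the expression in the statement.

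There is no real obstacle here: the argument is essentially an algebraic repackaging of Theorem~\ref{theorem:DiverseVC} once we notice that on a path decomposition $a$ collapses to $1$. The only thing to be careful about is the preprocessing step -- we must actually verify that a vertex cover of size at most $k$ exists before invoking Remark~\ref{rem:pw}, since otherwise the path decomposition produced need not have width $k$; handling the negative case directly as a \no-answer avoids this issue at negligible cost.
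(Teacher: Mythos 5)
Your proposal is correct and follows essentially the same route as the paper, which obtains the corollary by plugging the width-$k$ path decomposition of Remark~\ref{rem:pw} (so $w=k$ and $a=1$) into Theorem~\ref{theorem:DiverseVC} and absorbing the $2^r$ factor into $(2^{k+2}\cdot(k+1))^r$. Your extra preprocessing step of first finding a vertex cover of size at most $k$ (or answering \no) is a detail the paper leaves implicit, and you handle it correctly within the stated time bound.
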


\section{Computing Diverse Solutions using the Dynamic Programming Core model}\label{section:DynamicProgramming}

In this section, we show that the process illustrated in Section \ref{section:VertexCover},
of lifting a dynamic programming  algorithm for a combinatorial problem to an algorithm for its diverse version,
can be generalized to a much broader context. As a first step, we introduce the notion of 
{\em dynamic programming core}, a suitable formalization 
of the {\em intuitive} notion of tree-width based dynamic programming that satisfies three essential properties.
First, this formalization is general enough to be applicable to a large class of combinatorial optimization 
problems. Second, this formalization is compatible with the notion of diversity, in the sense that the
lifting of an algorithm for a problem to an algorithm for the diverse version of this problem can be 
done automatically, without requiring human ingenuity. Third, the resulting lifted algorithm is fast when 
compared with the original one. In particular, the running time of the resulting algorithm is polynomial 
on the diversity parameter. This is a highly desired property since this allows our framework 
to be applied in the context where the sizes of the considered solution sets are not bounded. 

Below, we let $\graphs$ be the set of simple, undirected graphs whose vertex
set is a finite subset of $\N$. We say that a subset
$\graphproblem \subseteq \graphs$ is a graph problem. Intuitively, 
a dynamic programming algorithm working on tree decompositions 
may be understood as a procedure that takes a graph $G\in \graphs$ 
and a rooted tree decomposition $\mathcal{D}$
of $G$ as input, and constructs a certain amount of data for each node of
$\mathcal{D}$.  The data at node $t$ is constructed by induction
on the height of $t$, and in general, this data is used to encode
the existence of a partial solution on the graph induced by bags
in the sub-tree of $\mathcal{D}$ rooted at $t$. 
In the below definition, this is captured in the relation $\process_{\acore,\agraph,\mathcal{D}}(t)$.
Such an algorithm
accepts the input graph $G$ if the data associated with the root
node contains a string belonging to a set of \emph{accepting strings},
captured below in the set $\accepting_{\acore,\agraph,\mathcal{D}}$.
We formalize this intuitive notion in the following concept of 
\emph{dynamic programming core}.

\begin{definition}[Dynamic Programming Core]\label{definition:DynamicCore}
	A \emph{dynamic programming core} is an algorithm $\acore$ that takes a graph $\agraph\in \graphs$ and 
	a rooted tree decomposition $\mathcal{D}$ of $\agraph$ as input, and produces the following data. 
\begin{itemize}
	\item A finite set $\accepting_{\acore,\agraph,\mathcal{D}} \subseteq 2^{\{0,1\}^*}$.
        \item A finite set $\process_{\acore,\agraph,\mathcal{D}} (t) \subseteq \left(2^{\{0,1\}^*}\right)^{\delta(t)+1}$ for each $t \in V(\mathcal{D})$.
\end{itemize}
\end{definition}

We let $\timecomplexity(\acore,G,\mathcal{D})$ be the overall time necessary to construct 
the data associated with all nodes of $\mathcal{D}$. The size of $\acore$ on a pair 
$(\agraph,\mathcal{D})$ is defined as 
\[\size(\acore,\agraph,\mathcal{D}) = \max  \{|\process_{\acore,\agraph,\mathcal{D}}(t)| \mid t\in V(\mathcal{D})\}.\]

Next, we define the notion of a {\em witness} for a dynamic programming core. Intuitively such witnesses are certificates of the existence of a solution. 

\begin{definition}\label{definition:Witness}
Let $\acore$ be a dynamic programming core, $\agraph$ be a graph in $\graphs$, and  
$\mathcal{D} = (T,q, \mathcal{X})$ be a rooted tree decomposition of 
$\agraph$. A \emph{$(\acore,\agraph,\mathcal{D})$-witness} is a function $\alpha:V(T)\rightarrow \{0,1\}^*$
such that the following conditions are satisfied. %
\begin{enumerate}
\item For each $t \in V(T)$, with children $t_1, \ldots, t_{\delta(t)}$,
  $(\alpha(t),\alpha(t_1), \ldots, \alpha(t_{\delta(t)})) \in \process_{\acore,\agraph,\mathcal{D}}(t)$.
\item  $\alpha(q)\in \accepting_{\acore}$.
\end{enumerate}
\end{definition}

Using the notion of witness, we define formally what it means for a dynamic programming core to solve a 
combinatorial problem. 

\begin{definition}\label{definition:Solvability}
  We say that a dynamic programming core $\acore$ solves a problem  $\mathcal{P}$ if for each graph 
$\agraph\in \graphs$, and each rooted tree decomposition $\mathcal{D}$ of $\agraph$,
$\agraph\in \graphproblem$ if and only if a $(\acore,\agraph,\mathcal{D})$-witness exists. 
\end{definition}

\begin{theorem}\label{theorem:DynamicSolvability}
Let $\graphproblem$ be a graph problem and $\acore$ be a dynamic programming core that solves $\graphproblem$. 
Given a graph $\agraph\in \graphs$ and a rooted tree decomposition $\mathcal{D}$ of 
$\agraph$, one can determine whether $\agraph\in \graphproblem$ in time
$$\mathcal{O}\left(\sum_{t \in V(T)}|\process_{\acore,\agraph,\mathcal{D}}(t)| + \timecomplexity(\acore,G,\mathcal{D})\right).$$
\end{theorem}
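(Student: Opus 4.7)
The plan is to run the dynamic core once to materialise the data it produces, and then perform a standard bottom-up pass over the tree decomposition $\mathcal{D}=(T,q,\mathcal{X})$ which, for every node $t$, records exactly those strings $w\in\{0,1\}^*$ that can appear as $\alpha(t)$ in some partial witness defined on $\subtree(T,t)$. The final answer is read off at the root by intersecting this set with the accepting set.

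More concretely, I would first invoke $\acore$ on $(\agraph,\mathcal{D})$ to obtain $\accepting_{\acore,\agraph,\mathcal{D}}$ and the collection $\{\process_{\acore,\agraph,\mathcal{D}}(t) : t \in V(T)\}$; by definition this costs $\timecomplexity(\acore,G,\mathcal{D})$. Next, processing nodes of $T$ in postorder, I would compute for each $t$ with children $t_1,\dots,t_{\delta(t)}$ the set
\[
W_t \;=\; \bigl\{\, w \in \{0,1\}^* \;:\; \exists\, (w,w_1,\dots,w_{\delta(t)})\in\process_{\acore,\agraph,\mathcal{D}}(t) \text{ with } w_i\in W_{t_i} \text{ for all } i\,\bigr\}.
\]
A straightforward induction on the height of $t$ shows that $W_t$ is precisely the set of values $\alpha(t)$ attained by $(\acore,\agraph,\mathcal{D})$-witnesses restricted to the subtree rooted at $t$. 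Applying Definition~\ref{definition:Solvability} at the root, $\agraph\in\graphproblem$ iff $W_q \cap \accepting_{\acore,\agraph,\mathcal{D}} \neq \emptyset$.

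For the running time, the dominant work outside of running $\acore$ is iterating once over each tuple in each $\process_{\acore,\agraph,\mathcal{D}}(t)$ and, for each such tuple, performing $\delta(t)$ membership queries in the sets $W_{t_i}$ of the children. The main obstacle to obtaining the claimed bound is therefore the implementation of these lookups: if we stored $W_{t_i}$ naively, each query could cost $\Omega(|W_{t_i}|)$ in the string length. I would resolve this by storing each $W_{t_i}$ in a hash table keyed on the bit-strings, so that each lookup takes expected constant time in the RAM model; alternatively one can lexicographically sort the tuples in $\process_{\acore,\agraph,\mathcal{D}}(t)$ on each coordinate and merge with the sorted $W_{t_i}$, which yields the same bound up to logarithmic factors that can be absorbed into the $\timecomplexity$ term (since $\acore$ must already write down the tuples). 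Summing over $t \in V(T)$ gives total additional work $\mathcal{O}\bigl(\sum_{t\in V(T)}|\process_{\acore,\agraph,\mathcal{D}}(t)|\bigr)$, and adding the cost of running $\acore$ and the final intersection at $q$ yields the claimed bound.
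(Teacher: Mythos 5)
Your proposal is correct and follows essentially the same route as the paper: run the core once, then compute bottom-up for each node $t$ the set of strings realizable as $\alpha(t)$ by a partial witness on $\subtree(T,t)$ (the paper calls this $\potentialSolutions(G,\mathcal{D},t)$), and decide at the root. You are in fact slightly more careful than the paper on two minor points --- explicitly intersecting with $\accepting_{\acore,\agraph,\mathcal{D}}$ at the root and justifying constant-time membership lookups --- both of which the paper's proof glosses over.
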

\begin{proof}
    Given $\acore$, $\agraph$, and $\mathcal{D} = (T, q, \mathcal{X})$, we construct the set $\accepting_{\acore,\agraph,\mathcal{D}}$ and 
	the sets $\process_{\acore,\agraph,\mathcal{D}}(t)$ for each $t \in V(\mathcal{D})$.
    By definition, this can be done in time $\timecomplexity(\acore,G,\mathcal{D})$.

    Given $t \in V(T)$ and $w \in \{0,1\}^*$, a \emph{$(\acore,G,\mathcal{D},t,w)$-witness} is
    a function \[\beta: V(\subtree(T,t)) \to \{0,1\}^*\] such that 
    for each $t' \in V(\subtree(T,t))$, with children $t_1, \ldots, t_{\delta(t)}$,
    $$(\beta(t'),\beta(t_1), \ldots, \beta(t_{\delta(t)})) \in \process_{\acore,\agraph,\mathcal{D}}(t)$$ and
    $\beta(t) = w$.
    Note that there exists a $(\acore,\agraph,\mathcal{D})$-witness if and only if there exists a $(\acore,G,\mathcal{D},q,w)$-witness for some $w \in \{0,1\}^*$.

    For each $t \in V(T)$, we define $\potentialSolutions(G,\mathcal{D},t)$ to be the set of every $w \in \{0,1\}^*$ such that there exists a $(\acore,G,\mathcal{D},t,w)$-witness.
    Let $t \in V(T)$ and assume that we are able to construct $\potentialSolutions(G,\mathcal{D},t_i)$ for every $i \in [1,\delta(t)]$ where $t_1, \ldots, t_{\delta(t)}$ are the children of $t$.
    We can then construct $\potentialSolutions(G,\mathcal{D},t)$ as follows.
    For each $(w,w_1, \ldots, w_{\delta(t)}) \in \process_{\acore,\agraph,\mathcal{D}}(t)$, we add $w$ to $\potentialSolutions(G,\mathcal{D},t)$ if for each $i \in [1,\delta(t)]$, $w_i \in \potentialSolutions(G,\mathcal{D},t_i)$.
    It is easy to see that for each such $w$, there exists a $(\acore,G,\mathcal{D},t,w)$-witness that is an extension of the $(\acore,G,\mathcal{D},t_i,w_i)$-witness, $i \in [1,\delta(t)]$.
    Moreover if there exists a $(\acore,G,\mathcal{D},t,w)$-witness $\beta$ for some $w \in \{0,1\}^*$, then, for each $i \in [1,\delta(t)]$, the restriction of $\beta$ to $\subtree(T,t_i)$ is a $(\acore,G,\mathcal{D},t_i,w_i)$-witness for some $w_i \in \{0,1\}^*$, and so, by induction hypothesis, $w_i \in \potentialSolutions(G,\mathcal{D},t_i)$.
    This implies that our construction has correctly added $w$ to $\potentialSolutions(G,\mathcal{D},t)$.
    Thus $\potentialSolutions(G,\mathcal{D},t)$ is correctly constructed.
    
    From  Definition~\ref{definition:Solvability} we have that  $\agraph\in \graphproblem$ if and only if $\potentialSolutions(G,\mathcal{D},q)\not = \emptyset$.
    Note that the time needed to construct $\potentialSolutions(G,\mathcal{D},q)$ is
    $\mathcal{O}\left(\sum_{t \in V(T)}|\process_{\acore,\agraph,\mathcal{D}}(t)|\right)$.
    Therefore, the theorem follows. 
\end{proof}

\subsection{Dynamic Programming Cores for Vertex Problems}

Let $\acore$ be a dynamic programming core. A \emph{$\acore$-vertex-membership function} is a function
$\membershipfunction:\N\times \{0,1\}^*\rightarrow \{0,1\}$ such that 
for each graph $G$, each rooted tree decomposition 
$\mathcal{D}= (T,q,\mathcal{X})$ of $G$ and each $(\acore,\agraph,\mathcal{D})$-witness $\alpha$, 
it holds that $\membershipfunction(v,\alpha(t)) = \membershipfunction(v,\alpha(t'))$ for each 
edge $(t,t')\in E(T)$ and each vertex $v\in X_t\cap X_{t'}$. Intuitively, if $\agraph$ is a graph and $\mathcal{D}$ is a rooted tree decomposition of $\agraph$, then 
a $\acore$-vertex-membership together with a $(\acore,\agraph,\mathcal{D})$-witness, provide
an encoding of a subset of vertices of the graph. More precisely, we let 
\[S_{\membershipfunction}(\agraph,\mathcal{D},\alpha) = \{v\mid \exists t\in V(T_{\mathcal{D}}), 
\membershipfunction(v,\alpha(t)) = 1\}\]
be this encoded vertex set. 
Given a $\acore$-vertex-membership function $\membershipfunction$, we let 
$\hat{\membershipfunction}: \{0,1\}^*\rightarrow 2^{\N}$ be the function that sets 
$\hat{\membershipfunction}(w) = \{v \in \N\mid \membershipfunction(v,w) = 1\}$ for each 
$w \in \{0,1\}^*$.

Let $\vertexproblem$ be a vertex-problem, $\acore$ be a dynamic programming core, and $\membershipfunction$ be a 
$\acore$-vertex-membership function. We say that \emph{$(\acore,\membershipfunction)$ solves $\vertexproblem$}
if for  each graph $\agraph\in \graphs$, each subset $S\subseteq V(\agraph)$, and 
each rooted tree decomposition $\mathcal{D}$, $(\agraph,S)\in \vertexproblem$
if and only if there exists a $(\acore,\agraph,\mathcal{D})$-witness $\alpha$ such that  
$S = S_{\membershipfunction}(\agraph,\mathcal{D},\alpha)$. 

The following theorem is the main result of this section. It shows how to transform dynamic programming cores 
for problems $\vertexproblem_1,\dots,\vertexproblem_r$ into a dynamic programming core for the problem 
$\diverse^d(\vertexproblem_1,\dots,\vertexproblem_r)$. 

\begin{theorem}\label{theorem:MainTheoremDynamicProgramming}
Let $\vertexproblem_1,\ldots,\vertexproblem_r$ be vertex-problems, let 
$(\acore_i,\membershipfunction_i)$ be a dynamic programming core for $\vertexproblem_i$, and let $d$ be an integer.  
The problem $\diverse^d(\vertexproblem_1,\ldots,\vertexproblem_r)$, on graph $\agraph$ with rooted tree decomposition 
	$\mathcal{D}=(T,\roots,\mathcal{X})$, can be solved in time 
$$\mathcal{O}(d^a\cdot |V(T)|\cdot \prod_{i=1}^{r}\size(\acore_i,\agraph,\mathcal{D}) + 
\sum_{i=1}^{r}\timecomplexity(\acore_i,\agraph,\mathcal{D})),$$ where $a = \max_{t \in V(T)}\delta(t) \leq 2$.
\end{theorem}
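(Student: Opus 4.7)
The plan is to combine the $r$ dynamic cores into a single product DP that, at each node $t$, stores an $r$-tuple of witness values $(w_1,\ldots,w_r)\in(\{0,1\}^*)^r$ together with an incrementally accumulated diversity value $\ell\in\intv{0,d}$, following the pattern of Section~\ref{section:VertexCover}. As preprocessing I would run each $\acore_i$ on $(\agraph,\mathcal{D})$ to produce $\accepting_{\acore_i,\agraph,\mathcal{D}}$ and $\process_{\acore_i,\agraph,\mathcal{D}}(t)$ for all $t\in V(T)$; this costs $\sum_{i=1}^{r}\timecomplexity(\acore_i,\agraph,\mathcal{D})$ and, after it, the individual cores interact only through their $\process$ and $\accepting$ sets.

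For each $t\in V(T)$ I would build a table $\Rcal_t$ of tuples $(w_1,\ldots,w_r,\ell)$ by induction on the height of $t$, with the intended invariant that $(w_1,\ldots,w_r,\ell)\in \Rcal_t$ iff for every $i\in \intv{1,r}$ there is a $(\acore_i,\agraph,\mathcal{D},t,w_i)$-witness $\beta_i$ (as defined inside the proof of Theorem~\ref{theorem:DynamicSolvability}) such that, writing $\hs_i = \bigcup_{s\in V(\subtree(T,t))}\hat\membershipfunction_i(\beta_i(s))$,
\[
\ell \;=\; \min\!\left(d,\; \sum_{v\,\in\,V(G_t)\setminus X_t}\diverseinfluence(\hs_1,\ldots,\hs_r,v)\right).
\]
The recurrence at $t$ with children $t_1,\ldots,t_{\delta(t)}$ iterates over an $r$-tuple of entries $(w_i,w_i^1,\ldots,w_i^{\delta(t)})\in \process_{\acore_i,\agraph,\mathcal{D}}(t)$ together with children entries $(w_1^j,\ldots,w_r^j,\ell_j)\in \Rcal_{t_j}$ for each $j\in\intv{1,\delta(t)}$, and inserts $(w_1,\ldots,w_r,\ell)$ into $\Rcal_t$ with
\[
\ell \;=\; \min\!\left(d,\; \sum_{j=1}^{\delta(t)}\ell_j \;+\; \sum_{v\in \forgotten(t)} \diverseinfluence(\hat\membershipfunction_1(w_1),\ldots,\hat\membershipfunction_r(w_r),v)\right).
\]
Acceptance is declared precisely when $\Rcal_q$ contains some $(w_1,\ldots,w_r,d)$ with every $w_i\in \accepting_{\acore_i,\agraph,\mathcal{D}}$.

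Correctness will follow by induction on the height of $t$, mirroring Lemma~\ref{lemma:Correctness}: since $(\acore_i,\membershipfunction_i)$ solves $\vertexproblem_i$, every $\hs_i$ extracted from an accepting tuple satisfies $(\agraph,\hs_i)\in \vertexproblem_i$, and by the incremental-diversity identity~\eqref{equation:Incremental} adding each vertex's influence exactly once, at the unique node where it is forgotten (using $X_q=\emptyset$), telescopes to $\diversity(\hs_1,\ldots,\hs_r)$ at the root. For the complexity, at each $t$ we examine at most $\prod_{i=1}^r |\process_{\acore_i,\agraph,\mathcal{D}}(t)|\cdot (d+1)^{\delta(t)}$ combinations, which is $\Ocal(d^a\cdot \prod_i \size(\acore_i,\agraph,\mathcal{D}))$; summing over $|V(T)|$ nodes and adding the preprocessing cost yields the bound stated in the theorem.

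The step I expect to be most delicate, and the one that justifies the whole framework, is to verify that the per-vertex contribution appearing in the recurrence depends only on $(w_1,\ldots,w_r)$ at $t$ and not on the entire sub-witnesses $\beta_i$ restricted to $\subtree(T,t)$. This is exactly what the defining axiom of a $\acore_i$-vertex-membership function buys us: $\membershipfunction_i(v,\alpha_i(t)) = \membershipfunction_i(v,\alpha_i(t'))$ for every tree-edge $(t,t')$ and every $v\in X_t\cap X_{t'}$. Consequently, the membership of $v$ in $\hs_i$ is pinned down as soon as $v$ enters a bag and remains invariant along the unique subpath of $T$ in which $v$ appears; counting its influence at the unique forget-node for $v$ is thus both unambiguous and in bijection with its total contribution to $\diversity(\hs_1,\ldots,\hs_r)$, which is what makes the recurrence---and its polynomial dependence on $d$---legitimate.
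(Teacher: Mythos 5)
Your proposal is correct and follows essentially the same route as the paper: you form the product of the $r$ cores, augment each table entry with a diversity counter $\ell\in\intv{0,d}$ capped at $d$, charge each vertex's influence $\diverseinfluence(w_1,\ldots,w_r,v)$ exactly once at its unique forget node, justify this via the vertex-membership axiom, and account for the $(d+1)^{\delta(t)}$ factor per node --- the paper merely packages the same computation as a new dynamic core and appeals to Theorem~\ref{theorem:DynamicSolvability} via Lemma~\ref{lemma:Intermediate}. The only nit is that, under the paper's convention $\forgotten(t)=X_t\setminus X_{\parentnode(t)}\subseteq X_t$, your stated inductive invariant should sum influences over $V(G_t)\setminus(X_t\setminus\forgotten(t))$ rather than $V(G_t)\setminus X_t$, since your own recurrence at $t$ already charges the vertices forgotten at $t$ itself; this is harmless at the root because $X_\roots=\emptyset$.
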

\begin{proof} 
Let $w_1,\ldots,w_r \in \{0,1\}^*$ and $v\in V(G)$.
We extend the definition of diverse influence to $w_1,\ldots,w_r$ such that
\[\diverseinfluence(w_1,\ldots,w_r,v) = \diverseinfluence(\hat{\membershipfunction}_1(w_1),\ldots,\hat{\membershipfunction}_r(w_r),v).\]
Before proving Theorem \ref{theorem:MainTheoremDynamicProgramming}, we state and prove the following 
technical lemma.

\begin{lemma}\label{lemma:Intermediate}
Let $G$  be a graph and $\mathcal{D} = (T,q,\mathcal{X})$ be a rooted tree decomposition of $G$. 
 $(G,Z_1,\ldots,Z_r)$ belongs to $\diverse^{d}(\vertexproblem_1,\ldots,\vertexproblem_r)$ if and only if
there exist $\alpha_1,\ldots,\alpha_r: V(T)\rightarrow \{0,1\}^*$ such that the following conditions are satisfied. 
\begin{enumerate}
\item For each $i\in [1,r]$, $\alpha_i$ is a $(\acore_i,G,\mathcal{D})$-witness
          and  $Z_i = S_{\membershipfunction_i}(\agraph,\mathcal{D},\alpha_i)$. 
\item $\sum_{t\in V(\mathcal{D})} \sum_{v \in \forgotten(t)} \diverseinfluence(\alpha_1(t),\ldots,\alpha_r(t),v) \geq d$. 
\end{enumerate}
\end{lemma}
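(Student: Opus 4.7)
The plan is to show that, using the characterizations provided by the dynamic cores, conditions~1 and~2 of the lemma together encode the two defining requirements of $\diverse^d(\vertexproblem_1,\ldots,\vertexproblem_r)$ applied to $(G, Z_1,\ldots,Z_r)$: (i) that $(G, Z_i) \in \vertexproblem_i$ for each $i \in [1,r]$, and (ii) that $\diversity(Z_1,\ldots,Z_r) \geq d$. Requirement~(i) is an immediate restatement of condition~1, since the hypothesis that $(\acore_i, \membershipfunction_i)$ solves $\vertexproblem_i$ gives $(G, Z_i) \in \vertexproblem_i$ if and only if some $(\acore_i, G, \mathcal{D})$-witness $\alpha_i$ satisfies $Z_i = S_{\membershipfunction_i}(G, \mathcal{D}, \alpha_i)$.

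The main task is therefore to match condition~2 with requirement~(ii). I will establish the identity
\[\sum_{t \in V(T)} \sum_{v \in \forgotten(t)} \diverseinfluence(\alpha_1(t),\ldots,\alpha_r(t), v) \;=\; \sum_{v \in V(G)} \diverseinfluence(Z_1,\ldots,Z_r, v) \;=\; \diversity(Z_1,\ldots,Z_r),\]
where the second equality is Equation~(\ref{equation:Incremental}). For the first equality I will observe that $\{\forgotten(t) : t \in V(T)\}$ partitions $V(G)$: every vertex $v$ occupies a non-empty connected subtree of $T$ by the defining property of tree decompositions, and since the root bag is empty, this subtree has a unique topmost node $t_v$, so $v \in \forgotten(t_v)$ and $v$ appears in no other $\forgotten(t)$. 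This reduces the first equality to the per-vertex claim $\diverseinfluence(\alpha_1(t_v),\ldots,\alpha_r(t_v), v) = \diverseinfluence(Z_1,\ldots,Z_r, v)$ for every $v \in V(G)$.

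The per-vertex claim amounts to showing, for each $i \in [1, r]$, that $v \in \hat{\membershipfunction}_i(\alpha_i(t_v))$ if and only if $v \in Z_i$. Since $v \in X_{t_v}$, the consistency property of $\membershipfunction_i$ along edges of $T$ implies that $\membershipfunction_i(v, \alpha_i(t))$ is constant as $t$ ranges over the connected set of nodes whose bag contains $v$. Together with the natural convention that $\membershipfunction_i(v, \alpha_i(t)) = 0$ when $v \notin X_t$, which is automatic for the tree-decomposition encodings produced by the dynamic cores considered here, this makes $v \in Z_i = S_{\membershipfunction_i}(G, \mathcal{D}, \alpha_i)$ equivalent to $\membershipfunction_i(v, \alpha_i(t_v)) = 1$. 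Hence the sets $\{i : v \in \hat{\membershipfunction}_i(\alpha_i(t_v))\}$ and $\{i : v \in Z_i\}$ coincide, and the two occurrences of $\diverseinfluence$ agree.

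Combining the pieces, condition~1 realizes~(i) and condition~2 realizes~(ii), in both directions, establishing the lemma. The argument is largely bookkeeping; the one delicate step is the per-vertex matching, whose correctness rests entirely on the consistency clause in the definition of a $\acore$-vertex-membership function, which lets us extract the global information $v \in Z_i$ from the single local evaluation at $t_v$.
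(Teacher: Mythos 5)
Your proof is correct and follows essentially the same route as the paper's: both reduce condition~1 to the definition of $(\acore_i,\membershipfunction_i)$ solving $\vertexproblem_i$, and both identify the double sum in condition~2 with $\diversity(Z_1,\ldots,Z_r)$ via the facts that the sets $\forgotten(t)$ partition $V(G)$ and that the influence evaluated at $\alpha_1(t_v),\ldots,\alpha_r(t_v)$ agrees with the influence evaluated at $Z_1,\ldots,Z_r$. You are in fact somewhat more careful than the paper about the per-vertex matching step (the implicit convention that $\membershipfunction_i(v,\alpha_i(t))=0$ when $v\notin X_t$), which the paper's proof uses without comment.
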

\begin{proof}
  First assume that $(G,Z_1,\ldots,Z_r)$ belongs to $\diverse^{d}(\vertexproblem_1,\ldots,\vertexproblem_r)$.
  By Definition~\ref{definition:DiverseProblem}, for each $i \in [1,r]$, we have that $(G,Z_i) \in \vertexproblem_i$,
  and so, there exists a $(\acore_i,\agraph,\mathcal{D})$-witness $\alpha_i$ such that  $Z_i = S_{\membershipfunction_i}(\agraph,\mathcal{D},\alpha_i)$.
  Thus Condition~1 is satisfied.
  Moreover, we have that for each $t \in V(\mathcal{D})\setminus \{\roots\}$ and each $v \in X_t$,
  $\diverseinfluence(\alpha_1(t),\ldots,\alpha_r(t),v) = \diverseinfluence(Z_1,\ldots,Z_r,v)$.
  Together with the fact that each vertex is in exactly one set $\forgotten(t)$, $t \in V(\mathcal{D})\setminus\{q\}$, and $\diversity(Z_1,\ldots,Z_r)\geq d$ imply Condition~2.

  Assume now that there exist $\alpha_1,\ldots,\alpha_r: V(T)\rightarrow \{0,1\}^*$ that satisfy Conditions 1 and 2.
  Condition~1 implies that for each $i \in [1,r]$, $(G,Z_i) \in \vertexproblem_i$.
	Moreover, as for each $v \in V(G)$, there is exactly one node ${t \in V(\mathcal{D})\setminus\{q\}}$ such that $v \in \forgotten(t)$, by definition of a rooted tree decomposition, 
  Condition~2 implies that $\diversity(Z_1,\ldots,Z_r)\geq d$.
  Thus, $(G,Z_1,\ldots,Z_r)$ belongs to $\diverse^{d}(\vertexproblem_1,\ldots,\vertexproblem_r)$.
\end{proof}

Now we are in a position to prove Theorem \ref{theorem:MainTheoremDynamicProgramming}. 

For each $i\in \{1,\ldots,r\}$, we start by constructing the data corresponding to the dynamic 
core $\acore_i$. The overall construction takes time $\sum_{i=1}^r \timecomplexity(\acore_i,\agraph,\mathcal{D})$. 

Subsequently, we define a dynamic core $\acore$ for the problem $\diverse^d(\vertexproblem_1,\ldots,\vertexproblem_r)$. 
Let $\agraph\in \graphs$ and $\mathcal{D}= (T, q, \mathcal{X})$ be a rooted tree decomposition of $\agraph$. The dynamic core
$\acore$ produces the following data. 
\begin{itemize}
\item $\accepting_{\acore} = \{(w_1,\ldots,w_r,d)\mid \forall  i \in [1,r], w_i \in \accepting_{\acore_i}\}$.
\item For each $t \in V(\mathcal{D})$, $\process_{\acore,\agraph,\mathcal{D}} (t) =$  
$$
\begin{array}{l}
\{((w_1,\ldots,w_r,\ell), (w_1^1,\ldots,w_r^1,\ell^1), 
	\ldots,(w_1^{\delta(t)},\ldots,w_r^{\delta(t)},\ell^{\delta(t)}))\mid \\ 
	\quad \forall i \in [1,r],  (w_i,w_i^1, \ldots, w_i^{\delta(t)}) \in \process_{\acore_i,\agraph,\mathcal{D}} (t),\\ 
 \quad s = \sum_{i \in [1,\delta(t)]} \ell^i + \sum_{v \in \forgotten(t)} I(w_1,\ldots,w_r,v),
 \ell = \min \{s,d\} \}.
\end{array}		
$$ 
\end{itemize}

Let $\alpha$ be a $\acore$-witness of $(G,\mathcal{D})$, let $\alpha_i$ be the projection of $\alpha$
to its $i$-th coordinate, and let $\beta$ be the projection of $\alpha$ to its last coordinate. Then we 
have that $\alpha$ is a $(\acore,\agraph,\mathcal{D})$-witness for $(G,\mathcal{D})$ if and only if $\alpha_i$ is a $(\acore_i,\agraph,\mathcal{D})$-witness
for $(G,\mathcal{D})$, and  for $q$ being the root of $\mathcal{D}$,
\[
 \beta(q) 
 =  \min\{d,\sum_{t\in V(\mathcal{D})}\sum_{v \in \forgotten(t)} \diverseinfluence(\alpha_1(t),\ldots,\alpha_r(t),v)\}\geq d.\]

 By Lemma~\ref{lemma:Intermediate}, we have that this happens if and only if 
 \[(G,S_{\membershipfunction_1}(\agraph,\mathcal{D},\alpha_{1}),\ldots, 
 S_{\membershipfunction_r}(\agraph,\mathcal{D},\alpha_{r}))\] 
 belongs to $\diverse^{d}(\vertexproblem_1,\ldots,\vertexproblem_r)$.

 Let now analyze the running time of this procedure.
 When constructing $\process_{\acore,\agraph,\mathcal{D}} (t)$ for some $t \in V(T)$, we need to combine every combination of elements of $\process_{\acore_i,\agraph,\mathcal{D}} (t)$, $i \in \intv{1,r}$ and of values of $\ell^i$, $i \in \intv{1,{\delta(t)}}$.
 This can be done in time $\Ocal(d^{\delta(t)}\cdot |V(T)|\cdot \prod_{i=1}^{r}\size(\acore_i,\agraph,\mathcal{D}))$.
 Thus constructing the data associated to $\acore$, $\agraph$, and $\mathcal{D}$ takes
 \[\Ocal(d^{\delta(t)} \cdot |V(T)|\cdot \prod_{i=1}^{r}\size(\acore_i,\agraph,\mathcal{D})+ \sum_{i=1}^{r}\timecomplexity(\acore_i,\agraph,\mathcal{D})).\]

 Moreover, as for every $t \in V(T)$, $|\process_{\acore,\agraph,\mathcal{D}} (t)| \leq d^{\delta(t)}\cdot \prod_{i=1}^{r}\size(\acore_i,\agraph,\mathcal{D})$, then by Theorem~\ref{theorem:DynamicSolvability},
 $\diverse^d(\vertexproblem_1,\ldots,\vertexproblem_r)$ can be solved in time
 $\Ocal(d^{a}\cdot |V(T)| \cdot \prod_{i=1}^{r}\size(\acore_i,\agraph,\mathcal{D}))$ where $a = \max_{t \in V(T)}\delta(t) \leq 2$.
 The theorem follows. 
\end{proof}

\subsection{An Illustrative Application of Theorem \ref{theorem:MainTheoremDynamicProgramming}}

In this subsection we show how to apply Theorem~\ref{theorem:MainTheoremDynamicProgramming} in 
the construction of an improved dynamic programming algorithm for {\sc Diverse Vertex Cover}.
The first thing to do is to describe a dynamic programming core  $\acore_{\rm VC}$ for {\sc $k$-Vertex Cover}. 
Given a graph $G$ and a rooted tree decomposition $\mathcal{D} = (T,\roots,\mathcal{X})$, this dynamic programming core $\acore_{\rm VC}$ produces:
\begin{align*}
	\accepting_{\acore,\agraph,\mathcal{D}} &= \{(S,s) \mid S \subseteq X_\roots, s \leq k\} \\
	\process_{\acore,\agraph,\mathcal{D}}(t) &=
	  \{((S,s),(S^1,s^1), \ldots, (S^{\delta(t)}, s^{\delta(t)})) \mid \\ 
	 	& E(G[X_t \setminus S]) = \emptyset, \\
	 	&\forall i \in \intv{1,{\delta(t)}}\colon S^i \cap X_t = S \cap X_{t_i}, \\
		&s = |\forgotten(t)\cap S| + \sum\nolimits_{t=1}^{\delta(t)}s^i\}
\end{align*}

Provided the width of the decomposition is at most $k$, this can be done in time $\Ocal((2^{k+1} \cdot (k+1))^{\delta(t)}\cdot k\cdot {\delta(t)})$ for each $t \in V(T)$, where the factor $k\cdot {\delta(t)}$ appears as we need the conditions $E(G[X_t \setminus S]) = \emptyset$ and $\forall i \in \intv{1,{\delta(t)}}, S^i \cap X_t = S \cap X_{t_i}$ to be verified.
It is easy to verify that $\acore_{\rm VC}$ is a dynamic programming core for the  {\sc Vertex Cover} problem.
As described in Remark~\ref{rem:pw}, we know that
we can construct a rooted path decomposition of $G$ of width $k$.
We are now considering this rooted path decomposition.
Thus, for each $t \in V(T)$, $|\process_{\acore,\agraph,\mathcal{D}} (t)| \leq 2 \cdot 2^{k+1} \cdot (k+1)$.
By Theorem~\ref{theorem:MainTheoremDynamicProgramming},
we obtain the following corollary, improving Corollary~\ref{corollary:DVC}.
\begin{corollary}\label{corollary:DVC_general}
  {\sc Diverse Vertex Cover} can be solved on an input $(G,k,r,d)$ in time \[\Ocal(d\cdot |V(G)| \cdot (2^{k+2} \cdot (k+1))^r + |V(G)| \cdot  2^{k+1} \cdot (k+1)\cdot k).\]
\end{corollary}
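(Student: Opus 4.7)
The plan is to instantiate Theorem~\ref{theorem:MainTheoremDynamicProgramming} with all $r$ vertex-problems $\vertexproblem_i$ equal to ``vertex cover of size at most $k$'', using the explicit dynamic core $\acore_{\rm VC}$ described just before the corollary, paired with the natural vertex-membership function $\membershipfunction$ that reads off, from a tuple $(S,s)$ at any bag containing $v$, whether $v\in S$. From the definitions of $\accepting_{\acore_{\rm VC}}$ and $\process_{\acore_{\rm VC},G,\mathcal{D}}(t)$, a routine bag-by-bag induction shows that $(\acore_{\rm VC},\membershipfunction)$ solves \textsc{Vertex Cover} in the sense of Definition~\ref{definition:Solvability}, so the hypotheses of Theorem~\ref{theorem:MainTheoremDynamicProgramming} are met.

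To make the parameter $a=\max_t\delta(t)$ as small as possible I would use a \emph{path} decomposition rather than a general tree decomposition. First run any standard FPT algorithm for \textsc{Vertex Cover} to either produce a vertex cover $Z$ of $G$ of size at most $k$ or reject; this takes $\Ocal(2^k\cdot|V(G)|)$ time, which is dominated by the claimed bound, and rejecting is correct because any solution to \textsc{Diverse Vertex Cover} contains at least one vertex cover of size $k$. Given $Z$, Remark~\ref{rem:pw} yields in linear time a rooted path decomposition $\mathcal{D}=(T,q,\mathcal{X})$ of $G$ of width at most $k$. In a path decomposition every node has at most one child, so $a=1$, which is exactly what produces the crucial $d^{1}$ (rather than $d^{2}$) dependence in the final bound.

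Next I substitute the size and time bounds of $\acore_{\rm VC}$ into Theorem~\ref{theorem:MainTheoremDynamicProgramming}. As recorded in the discussion preceding the corollary, on this path decomposition one has $|\process_{\acore_{\rm VC},G,\mathcal{D}}(t)|\le 2\cdot 2^{k+1}(k+1)=2^{k+2}(k+1)$ for every node $t$, so $\size(\acore_{\rm VC},G,\mathcal{D})\le 2^{k+2}(k+1)$; and $\timecomplexity(\acore_{\rm VC},G,\mathcal{D})=\Ocal(|V(G)|\cdot 2^{k+1}(k+1)\cdot k)$, where the factor $k$ arises from verifying the local conditions $E(G[X_t\setminus S])=\emptyset$ and $S^i\cap X_t=S\cap X_{t_i}$. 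Plugging these into the bound
\[
\Ocal\Bigl(d^{a}\cdot|V(T)|\cdot\prod_{i=1}^{r}\size(\acore_i,G,\mathcal{D})\;+\;\sum_{i=1}^{r}\timecomplexity(\acore_i,G,\mathcal{D})\Bigr)
\]
with $a=1$, $|V(T)|=\Ocal(|V(G)|)$, and $r$ identical cores yields exactly
\[
\Ocal\bigl(d\cdot|V(G)|\cdot(2^{k+2}(k+1))^{r}\;+\;r\cdot|V(G)|\cdot 2^{k+1}(k+1)\cdot k\bigr),
\]
and absorbing the trivial factor of $r$ in front of the second summand into the first (which already carries $(2^{k+2}(k+1))^{r}$) gives the statement of the corollary.

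I do not expect any real obstacle; the whole argument is essentially a substitution into the master theorem, with the improvement over Corollary~\ref{corollary:DVC} coming from two separate sources that the general framework cleanly separates: the $k\cdot r$ verification work is charged once per core in the additive $\timecomplexity$ term, instead of being repeated inside each combinatorial combination, and the path-decomposition hypothesis guarantees $a=1$. The only point requiring a moment's care is checking that the $\size$ and $\timecomplexity$ formulas for $\acore_{\rm VC}$ truly apply on a path decomposition, which follows by setting $\delta(t)=1$ in the per-node estimates given just before the corollary.
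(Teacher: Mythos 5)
Your proposal is correct and follows essentially the same route as the paper: instantiate Theorem~\ref{theorem:MainTheoremDynamicProgramming} with $r$ copies of the core $\acore_{\rm VC}$ on the width-$k$ rooted path decomposition obtained via Remark~\ref{rem:pw}, so that $a=1$ and the stated size and time bounds can be substituted directly. The only cosmetic difference is how the factor $r$ on the additive $\timecomplexity$ term is discharged: you absorb it into the first summand, while the paper observes that the $r$ identical cores need only be constructed once.
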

Note that we obtain a slightly better running time than for Corollary~\ref{corollary:DVC}.
This is due to the fact that verifying the properties $E(G[X_t \setminus S]) = \emptyset$ and $\forall i \in \intv{1,{\delta(t)}}, S^i \cap X_t = S \cap X_{t_i}$ is done when constructing $\acore_{\rm VC}$ and not when constructing $\acore$.
Note also that, formally, we need to construct $\acore_{\rm VC}$ $r$ times but as it is $r$ times the same, we do the operation only once.

\section{Diversity in Kernelization}
Another key concept in the field of parameterized complexity is that of a \emph{kernelization} algorithm~\cite{kernelization}.
We have obtained some parallel results about the kernelization complexity of diverse problems as well
that we want to briefly sketch in this section.
A polynomial kernel of a parameterized problem is a polynomial-time algorithm
that given any instance either solves it or constructs in polynomial time an equivalent\footnote{
Meaning that the constructed instance is a \yes-instance if and only if the original instance was.}
instance whose size is polynomial in the parameter.
It is known that a parameterized problem is FPT if and only if it has a (not necessarily polynomial)
kernel, and a natural step after proving a parameterized problem to be FPT is to decide whether or not
it has a polynomial kernel.

We show that the diverse variants of several basic problems parameterized 
by the number of requested solutions plus solution size admit polynomial kernels as well.
This is done via a variant of the recently introduced notion of \emph{loss-less kernels}~\cite{CH16}
which are a special class of kernelizations that - very roughly speaking - 
for each but polynomially many bits of the input can either decide whether it 
has to be part of every solution or if it may be added to a solution without `destroying' it.

For instance, consider the famous Buss kernel for \textsc{Vertex Cover}~\cite{BG93}: 
Given a graph $G$ and an integer $k$, we want to decide if $G$ has a vertex cover of size $k$.
Each vertex of degree at least $k + 1$ must be in each solution. 
Otherwise, we have to include its (at least) $k+1$ neighbors, exceeding the size constraint.
On the other hand, each isolated (degree-$0$) vertex can be included in a vertex cover without
destroying it, but it does not cover any edge. 
In the `non-diverse' variant, we may remove these isolated vertices, 
and in the diverse variant, we have to keep some of them as
they may be used to increase the diversity.
However, polynomially (in $k$ and $r$) many such vertices suffice.

We now turn to the technical description of this framework.
All problems that fall into our framework have to be \emph{subset minimization problems}.
In a \emph{subset minimization problem}, one part of the input is a set, called the \emph{domain} of the instance, and the objective is to find a minimum size subset of the domain that satisfies a certain property.
For a subset minimization problem $\Pi$, and an instance $I$ of $\Pi$, we denote by $\domain(I)$ the \emph{domain} of $I$. E.g., in the \textsc{Vertex Cover} problem, an instance consists of a graph $G$ and an integer $k$ and the domain of the instance is $V(G)$. For an instance $(I, k)$ of a parameterized problem, we denote its domain by $\domain(I)$.

The following definition is a technical requirement to adapt loss-less kernelization to the setting of diverse problems. Domain recovery algorithms will be used to reintroduce some elements of the domain that have been removed during the kernelization process, in a controlled manner.

\begin{definition} Let $\Pi$ be a subset minimization problem. A \emph{domain recovery algorithm} takes as input two instances of $\Pi$, $I$ and $I'$, with $\domain(I') \subseteq \domain(I)$, and a set $S \subseteq \domain(I) \setminus \domain(I')$ and outputs in polynomial time an instance $\domrecovery_I(I', S)$
on domain $\domain(I') \cup S$, such that 
$\card{\domrecovery_I(I', S)} \le \card{I'} + g(\card{S})$  
for some computable function $g$. %
\end{definition}
We give the definition of a loss-less kernel~\cite{CH16}, tailored to our purposes as follows.\footnote{Due to technical reasons and at a potential cost of slightly increased kernel sizes, we do not keep track of the \emph{restricted} items that are forbidden in any solution of size $k$.} We use the following notation: For an instance $I$ of a subset minimization problem and an integer $k$, we denote by $\solutions(I, k)$ the solutions of $I$ of size at most $k$.
\begin{definition}\label{def:loss-less:kernel}
	Let $\Pi$ be a parameterized subset minimization problem. A \emph{loss-less kernelization} of $\Pi$ is a pair of a domain recovery algorithm and an algorithm that takes as input an instance $(I, k) \in \Sigma^* \times \bN$ and either correctly concludes that $(I, k)$ is a \no{}-instance, or outputs a tuple $(I', F, A)$ with the following properties. 
	$(I', k - \card{F})$ is an equivalent\footnote{I.e.\ $(I, k)$ is a \yes{}-instance if and only if $(I', k - \card{F})$ is.} instance to $(I, k)$ and
	$(F, A)$ is a partition of $\domain(I) \setminus \domain(I')$, and the following hold.
	\begin{enumerate}[(i)]
		\item\label{def:loss-less:kernel:size} There is a computable function $f$ such that $\card{I'} \le f(k)$.
		\item\label{def:loss-less:kernel:sol:reduce} For all $k' \le k$, for all $X \subseteq \domain(I)$, the following holds. Let $A'' \defeq X \cap A$. Then,	
		\begin{align*}
			&X \in \solutions(I, k') \Leftrightarrow F \subseteq X \mbox{ and } \\
			&X \setminus (F \cup A'') \in \solutions(I', k' - \card{F \cup A''}).
		\end{align*}
		\item\label{def:loss-less:kernel:sol:recovery} For all $k' \le k - \card{F}$, for all $X' \subseteq \domain(I')$, and for all $A'' \subseteq A' \subseteq A$ we have that:
      \[X' \in \solutions(I', k') \Leftrightarrow X \cup A'' \in
	  \solutions(\domrecovery_I(I', A'), k' + \card{A''}).\] 
	\end{enumerate}
  We call $f(k)$ the \emph{size} and $g(\cdot)$\footnote{Function $g$ is given implicitly in the domain recovery algorithm $\domrecovery_I(I', A')$.} the \emph{recovery cost} of the loss-less kernel, $F$ the \emph{forced} items and $A$ the \emph{allowed} items.
\end{definition}

We show that as a direct consequence of this definition, all elements in $A$ can be added to any solution to $(I, k)$ such that the resulting set remains a valid solution to $(I, k)$.

\begin{theorem}\label{thm:loss-less:diverse:kernel}
	Let $\Pi$ be a parameterized subset minimization problem that admits a loss-less kernel of size $f(k)$ and recovery cost $g(\cdot)$. Then, \textsc{Diverse $\Pi$} admits a kernel of size at most $f(k) + g(kr)$.
\end{theorem}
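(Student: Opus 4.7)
The plan is to apply the loss-less kernelization to the underlying instance $(I, k)$ of $\Pi$ to obtain a tuple $(I', F, A)$ (or, if it concludes \no{}, to output a trivial \no{}-instance for the kernel), and then augment $I'$ with a carefully chosen subset $A^* \subseteq A$ via the domain recovery algorithm to get $I^* \defeq I' \domainrecovery_{A^*} I$. The output kernel for \textsc{Diverse $\Pi$} will be $(I^*, k - \card{F}, r, d)$. Choosing $\card{A^*} = \min(\card{A}, kr)$ yields the claimed size bound, since $\card{I^*} \le \card{I'} + g(\card{A^*}) \le f(k) + g(kr)$.

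The observation driving the construction is that, by property~(ii), every solution $X \in \solutions(I, k)$ must contain the set $F$ of forced items; therefore $F$ contributes $0$ to every pairwise Hamming distance and can be stripped without changing the diversity of any $r$-tuple of solutions. Decomposing an arbitrary solution as $X_i = F \cup Z_i \cup A_i$ with $Z_i \subseteq \domain(I')$ and $A_i \defeq X_i \cap A$, property~(ii) further yields $Z_i \in \solutions(I', k - \card{F} - \card{A_i})$, so in particular $\sum_i \card{A_i} \le r(k - \card{F}) \le kr$.

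The main combinatorial step is to argue that $\card{A^*} = \min(\card{A}, kr)$ suffices to replicate every family $A_1, \ldots, A_r$ of subsets of $A$ arising from $r$ candidate solutions. Given such a family, partition $A_1 \cup \cdots \cup A_r$ into its Venn atoms $B_T \defeq \{a \in A : \{i : a \in A_i\} = T\}$ indexed by $T \subseteq \intv{1,r}$. Since $\sum_T \card{B_T} = \card{A_1 \cup \cdots \cup A_r} \le kr \le \card{A^*}$, the atoms can be injected disjointly into $A^*$, producing sets $B^*_T \subseteq A^*$ with $\card{B^*_T} = \card{B_T}$. Setting $A^*_i \defeq \bigcup_{T \ni i} B^*_T$ then gives $\card{A^*_i} = \card{A_i}$ and $\card{A^*_i \cap A^*_j} = \card{A_i \cap A_j}$ for all $i, j$, so the family $(A^*_i)$ reproduces the pairwise Hamming distances of $(A_i)$ exactly.

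Equivalence of $(I, k, r, d)$ and $(I^*, k - \card{F}, r, d)$ then combines properties~(ii) and~(iii). In the forward direction, from solutions $X_i = F \cup Z_i \cup A_i$ one builds $Y^*_i \defeq Z_i \cup A^*_i$; property~(iii), applied with $A' = A^*$ and $A'' = A^*_i$, certifies $Y^*_i \in \solutions(I^*, k - \card{F})$, and atom preservation yields $\diversity(Y^*_1, \ldots, Y^*_r) = \diversity(X_1, \ldots, X_r) \ge d$. In the backward direction, splitting any kernel witness as $Y^*_i = Z_i \cup A^*_i$ with $A^*_i \defeq Y^*_i \cap A^*$, property~(iii) returns $Z_i \in \solutions(I', k - \card{F} - \card{A^*_i})$ and then property~(ii) lifts $F \cup Z_i \cup A^*_i$ to a solution of $(I, k)$ with unchanged diversity. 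The main obstacle is the Venn-atom construction establishing that $kr$ recovered elements suffice; the rest is careful bookkeeping of the loss-less kernel identities.
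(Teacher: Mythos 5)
Your kernel construction is identical to the paper's: apply the loss-less kernelization to $(I,k)$, recover a set $A^*\subseteq A$ of size $\min(\card{A},kr)$, and output $(I'\domainrecovery_{A^*}I,\,k-\card{F},\,r,\,\divT)$, which gives the size bound $f(k)+g(kr)$. Where you genuinely diverge is in the forward direction of the equivalence. The paper splits into two cases: for $\card{A}\le kr$ it keeps each $A_i=S_i\cap A$ untouched, and for $\card{A}>kr$ it replaces the $A_i$ by \emph{pairwise disjoint} sets $B_i\subseteq A^*$ with $\card{B_i}=\card{A_i}$, then argues via a two-sided estimate (deleting an element loses at most $r-1$ in diversity, adding a disjointly placed element gains exactly $r-1$) that diversity does not decrease. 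Your Venn-atom embedding instead injects the atoms of $(A_1,\ldots,A_r)$ into $A^*$ so that all pairwise intersection sizes --- hence all pairwise Hamming distances, hence the diversity --- are preserved \emph{exactly}, and it treats both cases uniformly. This is a clean refinement: it buys exact preservation of diversity rather than a one-sided bound, at essentially no extra cost. One cosmetic slip: the chain $\card{A_1\cup\cdots\cup A_r}\le kr\le\card{A^*}$ fails when $\card{A}<kr$ (there $\card{A^*}=\card{A}<kr$), but the inequality you actually need, $\card{A_1\cup\cdots\cup A_r}\le\card{A^*}$, still holds in that case because each $A_i\subseteq A=A^*$. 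Your backward direction spells out, correctly, what the paper dismisses as immediate.
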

\begin{proof}
	Let $(I, k, r, \divT)$ be an instance of \textsc{Diverse $\Pi$}. Our algorithm works as follows. 
	We apply the loss-less kernel to $(I, k)$ and obtain $(I', F, A)$. Let $k' \defeq k - \card{F}$. 
	Then, we simply return 
	$(\domrecovery_I(I', A^*), k', r, \divT)$ 
	where $A^* = A$ if $\card{A} \le kr$ and otherwise, $A^*$ is an arbitrary size-$kr$ subset of $A$. 
	We now show that 
	$(\domrecovery_I(I', A^*), k', r, \divT)$ 
	is indeed an instance of \textsc{Diverse $\Pi$} 
	that is equivalent to $(I, k, r, \divT)$.
	
	Suppose $(I, k, r, \divT)$ is a \yes{}-instance. Then, there is a tuple $\cS = (S_1, \ldots, S_r) \in \domain(I)^r$ such that for all $i \in [1,r]$, $S_i \in \solutions(I, k)$ and $\diversity(\cS) \ge \divT$.
	
	\medskip
	\noindent\textbf{Case 1 ($\card{A} \le kr$).} In this case, $A^* = A$. For all $i \in [1,r]$, let $A_i \defeq S_i \cap A$, $S_i' \defeq S_i \setminus A_i$ and $S_i^* \defeq S_i' \setminus F$. By Definition~\ref{def:loss-less:kernel}(\ref{def:loss-less:kernel:sol:reduce}), we have that $S_i^* \in \solutions(I', k - \card{F} - \card{A_i})$. By Definition~\ref{def:loss-less:kernel}(\ref{def:loss-less:kernel:sol:recovery}), this implies that 
	$S_i' \in \solutions(\domrecovery_I(I', A^*), k')$ 
	(recall that $k' = k - \card{F}$).
	Furthermore, since $F \subseteq S_i$ for all $i \in [1,r]$ by Definition~\ref{def:loss-less:kernel}(\ref{def:loss-less:kernel:sol:reduce}), we have that $\diversity(S_1',\ldots, S_r') = \diversity(\cS) \ge \divT$, and hence 
	$(\domrecovery_I(I', A^*), k', r, \divT)$ 
	is a \yes{}-instance in this case.
	
	\medskip
	\noindent\textbf{Case 2 ($\card{A} > kr$).} In this case, $A^*$ is an arbitrary size-$kr$ subset of $A$. For all $i \in [1,r]$, let $A_i \defeq S_i \cap A$, $S_i^* \defeq S_i \setminus (F \cup A_i)$. 
	By Definition~\ref{def:loss-less:kernel}(\ref{def:loss-less:kernel:sol:reduce}) we have that $S_i^* \in \solutions(I', k' - \card{A_i})$.	
	Furthermore, since removing an element from some $S_i$ can decrease the diversity of the resulting solution by at most $(r-1)$, and since $F \subseteq S_i$ for all $i \in [1,r]$ by Definition~\ref{def:loss-less:kernel}(\ref{def:loss-less:kernel:sol:reduce}), we have that
	\begin{align*}
		\diversity(S_1^*, \ldots, S_r^*) \ge \diversity(\cS) - (r-1)\sum\nolimits_{i = 1}^r \card{A_i}.
	\end{align*}
	We construct a tuple of solutions to
	$\domrecovery_I(I', A^*)$
	as follows. Let $(B_1, \ldots, B_r)$ a tuple of pairwise disjoint subsets of $A^*$ such that for all $i \in [1,r]$, $\card{B_i} = \card{A_i}$. Such a tuple exists since $\sum_{i = 1}^r \card{A_i} \le kr = \card{A^*}$. For $i \in [1,r]$, let $S_i' \defeq S_i^* \cup B_i$ and $\cS' \defeq (S_1', \ldots, S_r')$. Let $i \in [1,r]$. Since $S_i^* \in \solutions(I', k' - \card{A_i})$, $\card{A_i} = \card{B_i}$ and $B_i \subseteq A^*$, we use 
	Definition~\ref{def:loss-less:kernel}(\ref{def:loss-less:kernel:sol:recovery}) to conclude that 
	$S_i' \in \solutions(\domrecovery_I(I', A^*), k')$.
	
	Now, adding $B_i$ to $S_i^*$ increased the diversity of the resulting solution by $(r-1)\cdot\card{A_i}$, since no element of $B_i$ is added to any other solution. 
	Hence,
	\begin{align*}
		\diversity(\cS') &= \diversity(S_1^*, \ldots, S_r^*) + (r-1)\sum\nolimits_{i = 1}^r \card{A_i} \\
						&\ge \diversity(\cS) \ge \divT.
	\end{align*}
	We have shown that 
	$(\domrecovery_I(I', A^*), k', r, \divT)$ 
	is a \yes{}-instance in this case as well.
	
	For the other direction, suppose 
	$(\domrecovery_I(I', A^*), k', r, \divT)$ 
	is a \yes{}-instance. Then, (\ref{def:loss-less:kernel:sol:reduce})  and (\ref{def:loss-less:kernel:sol:recovery}) of Definition~\ref{def:loss-less:kernel} immediately imply that $(I, k, r, \divT)$ is a \yes{}-instance as well.
	
	To bound the size of 
	$\domrecovery_I(I', A^*)$,  
	we have that $\card{I'} \le f(k)$ by the definition of the (loss-less) kernel, and 
	$\card{\domrecovery_I(I', A^*)} \le \card{I'} + g(\card{A^*}) \le f(k) + g(kr)$ 
	by the definition of a domain recovery algorithm.
\end{proof}
We now exemplify the use of
Theorem~\ref{thm:loss-less:diverse:kernel} by showing that several
well-known kernels hold in the diverse setting as well, giving
polynomial kernels in the parameterization solution size plus the
number of requested solutions.

We briefly introduce these problems.
In the \textsc{$d$-Hitting Set} problem, we are given a hypergraph $H$, each of whose hyperedges contains at most $d$ elements, and an integer $k$, and the goal is to find a set $S \subseteq V(H)$ of vertices of $H$ of size at most $k$ such that each hyperedge contains at least one element from $S$. In the \textsc{Point Line Cover} problem, we are given a set of points in the plane and an integer $k$, and we want to find a set of at most $k$ lines such that each point lies on at least one of the lines. A directed graph $D$ is called a \emph{tournament}, if for each pair of vertices $u, v \in V(D)$, either the edge directed from $u$ to $v$ or the edge directed from $v$ to $u$ is contained in the set of arcs of $D$. In the \textsc{Feedback Arc Set in Tournaments} problem we are given a tournament and an integer $k$, and the goal is to find a set of at most $k$ arcs such that after removing this set, the resulting directed graph does not contain any directed cycles.

\begin{theorem}\label{cor:diverse:kernels}
	The following diverse subset minimization problems parameterized by $k + r$ admit polynomial kernels.
	\begin{enumerate}[(i)]
		\item\label{cor:diverse:kernels:vc} \textsc{Diverse Vertex Cover}, on $\cO(k(k+r))$ vertices.
		\item\label{cor:diverse:kernels:dhs} \textsc{Diverse $d$-Hitting Set} for fixed $d$, on $\cO(k^d +kr)$ vertices.
		\item\label{cor:diverse:kernels:plc} \textsc{Diverse Point Line Cover}, on $\cO(k(k+r))$ points.
		\item\label{cor:diverse:kernels:fast} \textsc{Diverse Feedback Arc Set in Tournaments}, on $\cO(k(k+r))$ vertices.
	\end{enumerate}
\end{theorem}
\begin{proof}
(\ref{cor:diverse:kernels:vc})\footnote{This was also observed in~\cite{CH16}.} The classical kernelization for \textsc{Vertex Cover} due to~\cite{BG93} consists of the following two reduction rules. Let $(G, k)$ be an instance of \textsc{Vertex Cover}. First, we remove isolated vertices from $G$; since they do not cover any edges of the graph, we do not need them to construct a vertex cover. To obtain the loss-less kernel, we put these vertices into the set $A$. Second, if there is a vertex of degree more than $k$, this vertex has to be included in any solution; otherwise we would have to include its more than $k$ neighbors, resulting in a vertex cover that exceeds the size bound. We add this vertex to $F$, remove it from $G$ and decrease the parameter value by $1$. This second reduction rule finishes the description of the kernel. It is not difficult to argue that after an exhaustive application of these two rules, the resulting kernelized instance $(G', k')$ is such that either $k' < 0$, in which case we are dealing with a \no{}-instance, or $\card{V(G')} = \cO(k^2)$.
	For the domain recovery algorithm, we can use a trivial algorithm that reintroduces some of the vertices in $A$ to the graph $G'$.
	
	We now argue that this is indeed a loss-less kernel. Consider Definition~\ref{def:loss-less:kernel}. Item (\ref{def:loss-less:kernel:sol:reduce}) follows immediately from the fact that each vertex cover of $G$ of size at most $k$ has to contain all vertices in $F$ and that each vertex in $A$ has no neighbors in $V(G')$. The latter also implies (\ref{def:loss-less:kernel:sol:recovery}).
	The result now follows from Theorem~\ref{thm:loss-less:diverse:kernel}.
	
	(\ref{cor:diverse:kernels:dhs}) We show that the kernel on $\cO(k^d)$ vertices presented in~\cite[Section 2.6.1]{Cygan} is a loss-less kernel. This kernel is essentially a generalization of the one presented in the proof of (\ref{cor:diverse:kernels:vc}), so we will skip some of the details.
	It is based on the following reduction rule: If there are $k+1$ hyperedges $e_1, \ldots, e_{k+1}$ with $Y \defeq \bigcap_{i = 1}^{k+1} e_i$ such that for each $i \in [k+1]$, $e_i \setminus Y \neq \emptyset$, then any solution has to contain $Y$; otherwise, to hit the hyperedges $e_1, \ldots, e_{k+1}$, we would have to include at least $k+1$ elements in the hitting set. Moreover, if $Y = \emptyset$, we can immediately conclude that we are dealing with a \no{}-instance. If $Y$ is nonempty, then we add all elements of $Y$ to $F$ and decrease the parameter value by $\card{Y}$. The set $A$ consists of all vertices that are isolated (i.e.\ not contained in any hyperedge) after exhaustively applying the previous reduction rule. Following the same argumentation above (and using the same domain recovery algorithm), we can conclude that this procedure is a loss-less kernel on $\cO(k^d)$ vertices, and the result follows from Theorem~\ref{thm:loss-less:diverse:kernel}.
	
	(\ref{cor:diverse:kernels:plc}) Let $(P, k)$ be an instance of \textsc{Point Line Cover}. We consider the set of the lines defined by all pairs of points of $P$ as the domain of $(P, k)$, and we denote this set by $L(P)$. All solutions to $(P, k)$ can be considered a subset of $L(P)$. We obtain a kernel on $\cO(k^2)$ points as follows, cf.~\cite[Exercise 2.4]{Cygan}. The idea is again similar to the kernel presented in~(\ref{cor:diverse:kernels:vc}). If there are $k+1$ points on a line, then we have to include this line in any solution; we add such lines to the set $F$ and remove all points on them from $P$, and decrease the parameter value by $1$. We finally add to $A$ all lines that have no points on them. We can argue in the same way as above that this gives a kernel with at most $\cO(k^2)$ points and with Theorem~\ref{thm:loss-less:diverse:kernel}, the result follows.
	
	(\ref{cor:diverse:kernels:fast}) We observe that the kernel given in~\cite[Section 2.2.2]{Cygan} is a loss-less kernel. Its first reduction rule states that if there is an arc that is contained in at least $k+1$ triangles, then we reverse this arc and decrease the parameter value by $1$, and the second reduction rule states that any vertex that is not contained in a triangle can be removed. Any arc affected by the former rule will be put in the set $F$ and any arc affected by the latter rule will be put in the set $A$.
	We now describe the domain recovery algorithm. Let $(T, k)$ be the original instance and $(T', k')$ the kernelized instance, and let $(u, v) = a \in A$ be an arc. Then, we add $a$ to $T'$ and to ensure that the resulting directed graph is a tournament, for any $x \in \{u, v\} \setminus V(T')$, we add all arcs $(x, y) \in E(T)$ and $(y, x) \in E(T)$ to $T'$. Since $a \in A$, we know that one of its endpoints was not contained in any triangle, and hence adding the endpoints of $a$ and all their incident arcs does not add any triangles to the tournament. 
\end{proof}

We would like to remark that the crucial part to use loss-less kernels in the diverse setting was that \emph{any} solution of size at most $k$ has to contain all vertices of $F$, and arbitrarily adding vertices from $A$ does not destroy a solution. In the `classical' kernelization setting, 
to argue that a reduction rule is safe it is sufficient to show that the existence of a vertex cover in the original instance implies the existence of \emph{some} vertex cover in the reduced instance and vice versa, see e.g.,~\cite{FJKRW18,kernelization}. This alone is usually not enough to argue that a reduction preserves diverse solutions.

\section{Conclusion}

In this work, we considered a formal notion of diversity of a set of solutions 
to combinatorial problems in the setting of parameterized algorithms.
We showed how to emulate treewidth based dynamic programming algorithms in order to solve diverse problems in FPT time, 
with the number $r$ of requested solutions being an additional parameter. %

This line of research is now wide open, with many natural questions to address. As all our results are of a positive nature, we ask: when can diversity be a source of hardness?
Concretely, a natural target in parameterized complexity would be to identify a parameterized problem $\Pi$ that is FPT, however \textsc{Diverse $\Pi$} being W[1]-hard when $r$ is an additional parameter. For positive results, an interesting research direction would be to generalize our framework for diverse problems to other well-studied width measures for graphs, as well as to other structures, such as matroids.

In this work, we considered the \emph{sum} of all pairwise Hamming distances of a set as a measure of diversity.
As pointed out, this measure has some weaknesses, and another widely used measure is the \emph{minimum} Hamming distance.
In this setting, we only obtain FPT-results when the diversity is bounded by 
a function of the treewidth and the number of solutions, but not in general.
So, a natural follow-up question is whether or not we can obtain FPT-results 
under the minimum Hamming distance, even if the diversity is unbounded.

\subparagraph*{Acknowledgements.}
M.\ Fellows and F.\ Rosamond acknowledge support from the Norwegian NFR
Toppforsk Project, ``Parameterized Complexity and Practical Computing''
(PCPC) (no.\ 813299).
M.\ Fellows, F.\ Rosamond, L.\ Jaffke, M.\ de Oliveira Oliveira, and G. Philip
acknowledge support from the Bergen Research Foundation grant ``Putting
Algorithms Into Practice'' (BFS, no.\ 810564).
G.\ Philip acknowledges support from the Research Council of Norway grants
``Parameterized Complexity for Practical Computing'' (NFR, no.\ 274526d),
``MULTIVAL'', and ``CLASSIS'', and European Research Council (ERC) under the
European Union's Horizon 2020 research and innovation programme (ERC, no.\
819416).
T.\ Masa\v{r}\'{i}k acknowledges support from the European Research Council
(ERC grant agreement no.\ 714704)
~\includegraphics[height=15px]{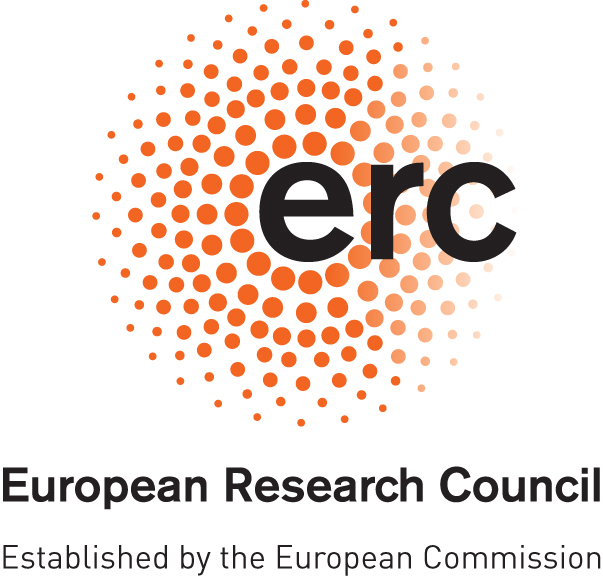}%
~\includegraphics[height=15px]{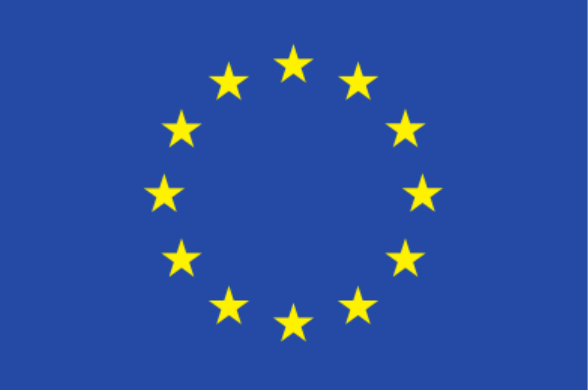}%
,
and from Charles University’s grants GAUK 1514217 and SVV-2017-260452.
He recently started a postdoc at Simon Fraser University, Canada.
M.\ de Oliveira Oliveira acknowledges support from the Research Council of
Norway (NFR, no.\ 288761).
J.\ Baste acknowledges support from the  German Research Foundation
(DFG, no.\ 388217545).

\bibliographystyle{plainurl}
\bibliography{diverse-full}

\end{document}